\documentclass[sigconf,balance=false]{acmart}
\usepackage{popets}

\usepackage[utf8]{inputenc}
\usepackage{float} 

\usepackage{amsmath,amsthm,amsfonts}
\usepackage{mathtools}
\usepackage{centernot}
\usepackage{svg}
\usepackage{graphicx,xspace}
\usepackage{paralist}
\usepackage{appendix}
\usepackage{xspace}
\usepackage{setspace}
\usepackage{multirow}
\usepackage{ragged2e}
\usepackage{float}
\usepackage[normalem]{ulem}

\usepackage{natbib} 
\usepackage{cleveref}
\usepackage{algorithm}
\usepackage{mathrsfs}
\usepackage[noend]{algpseudocode}
\usepackage{algorithmicx}

\newtheorem{theorem}{Theorem}

\newtheorem{lemma}[theorem]{Lemma}
\newtheorem{definition}[theorem]{Definition}

\newtheorem{corollary}[theorem]{Corollary}

\newtheorem*{definition*}{Definition}

\newcommand{\revision}[1]{#1}

\newcommand{\eps}{\varepsilon}

\newcommand{\universe}{\mathcal{U}}

\newcommand{\mech}{\ensuremath{\mathcal{M}}\xspace}

\newcommand{\thresh}{\ensuremath{T}\xspace}

\newcommand{\maxcontrib}{\ensuremath{\Delta_0}\xspace}
\newcommand{\numiters}{\ensuremath{I}\xspace}
\newcommand{\N}{\ensuremath{\mathcal{N}}}

\newcommand{\privratio}{\ensuremath{r}\xspace}

\newcommand{\reddit}{\texttt{Reddit}\xspace}
\newcommand{\twitter}{\texttt{Twitter}\xspace}
\newcommand{\finance}{\texttt{Finance}\xspace}
\newcommand{\imdb}{\texttt{IMDb}\xspace}
\newcommand{\wikipedia}{\texttt{Wikipedia}\xspace}
\newcommand{\amazon}{\texttt{Amazon}\xspace}
\newcommand{\synth}{\texttt{Synthetic\_80M}\xspace}

\newcommand{\calx}{\ensuremath{\mathcal{X}}}
\newcommand{\caly}{\ensuremath{\mathcal{Y}}}
\newcommand{\calz}{\ensuremath{\mathcal{Z}}}

\newcommand{\noisedistr}{\ensuremath{\mathcal{D}}}

\newcommand{\wthist}{\ensuremath{\texttt{Weighted\_Hist}}}

\newcommand{\naive}{\texttt{Weighted\_Gauss}\xspace}

\setcopyright{popets}
\copyrightyear{YYYY}

\acmYear{YYYY}
\acmVolume{YYYY}
\acmNumber{X}
\acmDOI{XXXXXXX.XXXXXXX}
\acmISBN{}
\acmConference{Proceedings on Privacy Enhancing Technologies}
\settopmatter{printacmref=false,printccs=false,printfolios=true}

\begin{document}

\title[DP-SIPS: A simpler, more scalable mechanism for differentially private partition selection]{DP-SIPS: A simpler, more scalable mechanism for differentially private partition selection}

  \author{Marika Swanberg}
  \authornote{Work done while at Tumult Labs}
  \affiliation{
     \institution{Boston University}
     \department{Department of Computer Science}
     \country{}}
  \email{marikas@bu.edu}

  \author{Damien Desfontaines}
     \affiliation{
     \institution{Tumult Labs}
     \country{}}
  \email{damien@desfontain.es}

  \author{Samuel Haney}
     \affiliation{
     \institution{Tumult Labs}
     \country{}}
  \email{sam.haney@tmlt.io}

\begin{abstract}
    Partition selection, or set union, is an important primitive in differentially private mechanism design: in a database where each user contributes a list of items, the goal is to publish as many of these items as possible under differential privacy.
    
    In this work, we present a novel mechanism for differentially private partition selection. This mechanism, which we call {DP-SIPS}, is very simple: it consists of iterating the naive algorithm over the data set multiple times, removing the released partitions from the data set while increasing the privacy budget at each step. This approach preserves the scalability benefits of the naive mechanism, yet its utility compares favorably to more complex approaches developed in prior work.
\end{abstract}

\keywords{Differential privacy, partition selection, scalable algorithms}

\maketitle

\section{Introduction}
\revision{

Group-level aggregation is a fundamental building block for many data analysis tasks. For example, doctors may want to understand mortality rates for patients with different underlying conditions, and search engine developers may want to study the frequency of different search queries that users are making.  

With increased awareness of the risks of data disclosure, increasingly data analysts are using differentially private mechanisms to answer queries on sensitive data sets. In both of the given examples, the aggregate values (e.g., the mortality rates, or the frequencies of search queries) are sensitive, but also the groups themselves are sensitive. Revealing that \emph{some} user in the data set has a particular disease or made a particular search query could \emph{itself} violate privacy---even if it is never revealed \emph{who} that person is. Additionally, the set of groups may be \emph{a priori} unknown or impractical to enumerate, like in vocabulary extraction~\cite{gopi2020differentially} or the private release of search queries~\cite{korolova2009releasing}. 

In order to privately do a group-level aggregation in such settings, one must first privately compute the set of groups represented in the data set---ideally, releasing as many as possible while still maintaining privacy. More formally, the mechanism $M$ gets a data set $x = (W_1, \ldots, W_N)$ where each user $i$ has a list of items $W_i$ and $M$ differentially privately releases a subset $S \subseteq \cup_i W_i$ of the partitions that is as large as possible.

We study this problem, called \emph{differentially private partition selection} (also called key selection or set union), with a focus on approaches that can be incorporated into general-purpose differentially private tooling. When designing such frameworks, scalability is of paramount importance~\cite{wilson2020differentially,amin2022plume,tumultanalyticswhitepaper}: underlying mechanisms must be able to run even when the input or the output is too large to fit on a single machine.
In such cases, the computation itself must also be parallelizable, so it can run across multiple machines and avoid unacceptably long running times.
}

Differentially private partition selection mechanisms have been proposed as early as 2009~\cite{korolova2009releasing}, but recent work has shed a new light on this problem, and proposed alternative approaches that bring significant utility gains~\cite{gopi2020differentially,carvalho2022incorporating}.
To obtain these utility improvements, these newer mechanisms use a \emph{greedy} approach: each user considers what items have been contributed by previous users so far, and ``chooses'' which items to contribute according to a \emph{policy}, chosen carefully to maintain a sensitivity bound.
Unfortunately, these mechanisms do not scale: each user chooses their contribution based on the contribution of all previous users, so the data has to be processed one user after another, and the overall algorithm cannot be parallelized.
This intuition can be verified experimentally: we show that such algorithms eventually time out or run out of memory when the input is very large, even on clusters of multiple machines.
Because these greedy algorithms do not scale, differential privacy tools that need to handle large datasets cannot use these smarter approaches, and instead must rely on the naive algorithm and its underwhelming utility~\cite{privacyonbeam,pipelinedp,tumultanalyticssoftware}.

This raises a natural question: can we achieve the utility benefits of policy-based approaches, while preserving the scalability of more naive approaches?
In this work, we introduce a new approach that combines both benefits: DP-SIPS, short for \emph{scalable, iterative partition selection}.

DP-SIPS relies on a simple idea: rather than having to process the data of each user sequentially, it runs the naive, massively-parallelizable algorithm multiple times, splitting the privacy budget between each step and removing items that were previously discovered. It uses a minuscule amount of privacy budget to publish and then remove the ``heavy hitters''--those which have an enormous amount of weight in the histogram--and allocates the majority of the budget to discovering items that remain after heavy hitters are removed from the dataset. On skewed datasets, the removal of heavy hitters allows users to allocate proportionally more weight in subsequent histograms to less-frequent items while the increased privacy budget on later iterations also \emph{lowers} the threshold for releasing an item.

As we show experimentally on multiple real-world and synthetic datasets, this mechanism has a similar utility to greedy approaches, but scales horizontally: increasing the number of cluster nodes significantly reduces runtime.
This makes it a suitable choice for implementation in general-purpose DP infrastructure with high scalability requirements. 



The rest of this paper is organized as follows:
\begin{itemize}
    \item In Section~\ref{sec:prelim}, we formally define the problem and the building blocks we use for DP-SIPS and its privacy accounting.
    \item In Section~\ref{sec:prior}, we detail existing approaches to differentially private partition selection.
    \item In Section~\ref{sec:sips}, we introduce our algorithm and the proof of its privacy guarantees.
    \item In Section~\ref{sec:experiments}, we report on the experimental evaluation of DP-SIPS.
\item Finally, in Section~\ref{sec:discussion}, we discuss our results, report on some unsuccessful approaches that we tried, and outline directions for future work.
\end{itemize}
\section{Preliminaries}\label{sec:prelim}
A data set $x = (W_1, \ldots, W_N)$ contains a set of user lists $W_i \in \universe^*$. We refer to elements in $\universe$ as \emph{items} or \emph{partitions}, and define partition selection (also called key selection or set union) as follows:

\begin{definition}[Private Partition Selection Problem]

Given a (possibly unbounded) universe $\universe$ of items and a data set $x = (W_1, \ldots, W_N)$ of user lists $W_i \in \universe^*$, an algorithm $\mech$ solves the \emph{private partition selection problem} if it is differentially private, and \mech outputs a set $S\subseteq \cup_i W_i$.
\end{definition}

We begin by presenting the standard notion of differential privacy. Two data sets $x, x'$ are \emph{neighbors} if they differ on one user's list: $x = x' \cup W_{i^*}$. Informally, differential privacy requires that an algorithm's output is distributed similarly on every pair of neighboring data sets.

\begin{definition}[Differential Privacy~\cite{dwork2006our, dwork2006calibrating}] A randomized algorithm $\mech: \universe^* \to \caly$ is $(\eps, \delta)$-differentially private if for every pair of neighboring datasets $x, x' \in \universe^*$ and for all subsets $Y \subseteq \caly$, 
\begin{equation*}
    \Pr[\mech(x) \in Y] \leq e^\eps \cdot \Pr[\mech(x')\in Y] + \delta.
\end{equation*}
\end{definition}

A common variant of differential privacy, called zCDP, is useful for analyzing algorithms that sample noise from a Gaussian distribution (as ours will).
The definition of zCDP uses the R\'enyi Divergence:

\begin{definition}[R\'enyi Divergence] Fix two probability distributions $P$ and $Q$ over a discrete domain $S$. Given a positive $\alpha\neq 1$, R\'enyi divergence of order $\alpha$ of distributions $P$ and $Q$ is
\begin{equation*}
    D_\alpha(P||Q) = \frac{1}{1-\alpha} \log \left(\sum_{s\in S} P(s)^\alpha Q(s)^{1-\alpha}\right).
\end{equation*}
\end{definition}

\begin{definition}[$\rho$-zCDP \cite{bun2016concentrated}]\label{def:pure_zCDP}
A randomized mechanism $\mech:\calx^* \to \caly$ satisfies $\rho$-zCDP if, for all $x, x'\in \calx^*$ differing on a single entry,
\begin{align}
    & D_\alpha(M(x) || M(x')) \leq \rho \cdot \alpha \quad \forall \alpha \in (1,\infty).
\end{align}
\end{definition}

\revision{
This definition can also be relaxed to \emph{approximate} zCDP.

\begin{definition}[Approximate zCDP \cite{bun2016concentrated}]\label{def:approx_zCDP}
A randomized mechanism $M:\calx^n \to \caly$ is $\delta$-approximately $\rho$-zCDP if, for all $x, x'\in \calx^n$ differing on a single entry, there exist events $E=E(M(x))$ and $E'=E'(M(x'))$ such that, for all $\alpha \in (1,\infty)$,
\begin{align}
    & D_\alpha(M(x)|_E || M(x')|_{E'}) \leq \rho \cdot \alpha \quad \text{and} \label{eq:approx_zCDP_1}\\
    & D_\alpha(M(x')|_{E'} || M(x)|_{E}) \leq \rho \cdot \alpha\label{eq:approx_zCDP_2},
\end{align}
and $\Pr[E] \geq 1-\delta$ and $\Pr[E']\geq 1-\delta$.
\end{definition}

Approximate zCDP satisfies composition and post-processing properties.

\begin{lemma}[\cite{bun2016concentrated}, Lemma 8.2]\label{lem:comp}
Let $M: \calx^n \to \caly$ and $M': \calx^n \times \caly \to \calz$. Suppose $M$ satisfies $\delta$-approximate $\rho$-zCDP and for all $y \in \caly$,  $M'(\cdot, y): \calx^n \to \calz$ satisfies $\delta'$-approxmiate $\rho'$-zCDP. Define $M'':\calx^n \to \calz$ by $M''(x) = M'(x, M(x))$. Then, $M''$ satisfies $(\delta + \delta' - \delta\cdot \delta')$-approximate $(\rho + \rho')$-zCDP.
\end{lemma}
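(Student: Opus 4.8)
The plan is to reduce the statement to a chain rule for R\'enyi divergence on conditioned distributions, together with careful bookkeeping of the conditioning events. Since $M''(x) = M'(x, M(x))$ is a post-processing (projection onto the second coordinate) of the joint mechanism $\tilde M(x) = (M(x), M'(x, M(x)))$, and approximate zCDP is closed under post-processing, it suffices to prove that $\tilde M$ satisfies $(\delta + \delta' - \delta\delta')$-approximate $(\rho+\rho')$-zCDP. I would fix neighbors $x, x'$, invoke the guarantee of $M$ to obtain events $E, E'$ on $M(x), M(x')$ each of probability at least $1-\delta$, and for every fixed $y \in \caly$ invoke the guarantee of $M'(\cdot, y)$ to obtain events $F_y, F'_y$ on $M'(x,y), M'(x',y)$ each of probability at least $1-\delta'$.

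Next I would assemble the combined good events. Writing $\tilde M(x) = (Y, Z)$ with $Y = M(x)$ and $Z = M'(x, Y)$, define $G$ as the event that $E$ holds for $Y$ and, simultaneously, $F_Y$ holds for $Z$; define $G'$ analogously for $\tilde M(x')$ using $E'$ and $F'_Y$. The probability bound follows by conditioning on $Y$: for each realized value $y$ the inner event has probability at least $1-\delta'$ regardless of $y$, so $\Pr[G] \ge (1-\delta')\Pr[E] \ge (1-\delta)(1-\delta') = 1 - (\delta + \delta' - \delta\delta')$, and symmetrically for $G'$. This is exactly where the multiplicative combination of failure probabilities $\delta + \delta' - \delta\delta'$ arises.

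The technical core is the R\'enyi bound on the conditioned joint laws. Conditioned on $G$, the law of $(Y, Z)$ factors as $Y \sim M(x)|_E$ and $Z \mid Y = y \sim M'(x,y)|_{F_y}$; likewise under $G'$ it factors as $Y \sim M(x')|_{E'}$ and $Z \mid Y = y \sim M'(x',y)|_{F'_y}$. I would then apply the chain rule for R\'enyi divergence: for product-form laws $P(y,z) = P_1(y)P_2(z \mid y)$ and $Q(y,z) = Q_1(y)Q_2(z\mid y)$ and any $\alpha > 1$,
\[
D_\alpha(P \,\|\, Q) \le D_\alpha(P_1 \,\|\, Q_1) + \sup_{y} D_\alpha\bigl(P_2(\cdot \mid y) \,\|\, Q_2(\cdot \mid y)\bigr),
\]
which is obtained by expanding
\[
\exp\bigl((\alpha-1)D_\alpha(P\|Q)\bigr) = \sum_{y} P_1(y)^\alpha Q_1(y)^{1-\alpha}\,\exp\bigl((\alpha-1)D_\alpha(P_2(\cdot\mid y)\|Q_2(\cdot\mid y))\bigr)
\]
and using $\alpha - 1 > 0$ to pull out the supremum. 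Plugging in, the first term is at most $\rho\alpha$ by the guarantee of $M$ and the supremum is at most $\rho'\alpha$ by the guarantee of $M'(\cdot, y)$ with its events $F_y, F'_y$, giving $D_\alpha(\tilde M(x)|_G \,\|\, \tilde M(x')|_{G'}) \le (\rho + \rho')\alpha$; the reverse direction is identical. Applying post-processing then yields the claim for $M''$.

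I anticipate two points requiring care. The first is the measurability bookkeeping: each $E$ depends only on $Y$ while each $F_y$ is a $y$-indexed event on the $Z$-slice, and one must verify that the conditional law under $G$ really factors as stated so that the chain rule applies verbatim. This is precisely what forces the reduction to the joint mechanism $\tilde M$ rather than arguing about $M''$ directly, since the natural good event involves the intermediate output $Y$, which $M''$ does not reveal. The second is that the chain rule only bounds the divergence of the \emph{joint} output; recovering the guarantee for $M''$ relies on post-processing invariance of approximate zCDP, which I would either cite from \cite{bun2016concentrated} or re-establish via the data-processing inequality for R\'enyi divergence applied to the conditioned distributions.
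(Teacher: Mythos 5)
A preliminary remark: the paper does not prove this lemma at all; it imports it verbatim from Lemma 8.2 of \cite{bun2016concentrated}, so your proposal can only be judged on its own merits. Your architecture---pass to the joint mechanism $\tilde M(x) = (M(x), M'(x,M(x)))$, combine the good events, apply a chain rule for R\'enyi divergence, then post-process---is the standard route. However, the step you yourself flag as ``requiring care'' is in fact false as stated, and this is a genuine gap. Conditioned on $G = E \cap F_Y$, the conditional law of $Z$ given $Y = y$ is indeed $M'(x,y)|_{F_y}$, but the marginal of $Y$ is \emph{not} $M(x)|_E$: using independence of the two stages' randomness,
\[
\Pr[Y = y \mid G] \;=\; \frac{\Pr[Y=y,\,E]\cdot\Pr[F_y]}{\Pr[G]},
\]
i.e.\ $M(x)|_E$ reweighted by $y \mapsto \Pr[F_y]$, and similarly for $x'$ with weights $\Pr[F'_y]$. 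These weights lie in $[1-\delta',1]$ but need not be constant in $y$, and they are \emph{different} functions of $y$ for $x$ and for $x'$. The chain rule therefore bounds $D_\alpha(\tilde M(x)|_G \,\|\, \tilde M(x')|_{G'})$ by $\rho'\alpha$ plus the divergence of the two \emph{reweighted} marginals, which can exceed $\rho\alpha$ by an additive term of order $\log\frac{1}{1-\delta'}$. Concretely, take $\rho=\rho'=0$, let $M(x)|_E = M(x')|_{E'}$ be uniform on $\{0,1\}$, and set $\Pr[F_0]=1$, $\Pr[F_1]=1-\delta'$, $\Pr[F'_0]=1-\delta'$, $\Pr[F'_1]=1$: the conditioned joint laws then have strictly positive R\'enyi divergence in both directions, while the lemma demands a bound of $(\rho+\rho')\alpha = 0$. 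Since a zCDP guarantee must hold for every $\alpha \in (1,\infty)$, including $\alpha \to 1^+$, this additive slack cannot be absorbed, and the proof as written does not establish the lemma.

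The missing idea is to normalize the event probabilities before combining them. Adjoin independent auxiliary coins and replace $E$ by $E \cap \{U_0 \le (1-\delta)/\Pr[E]\}$ and each $F_y$ by $F_y \cap \{U_1 \le (1-\delta')/\Pr[F_y]\}$ (likewise for $E'$ and $F'_y$). This leaves every conditional law $M(x)|_E$ and $M'(x,y)|_{F_y}$ untouched, since the coins are independent of the outputs, but makes all acceptance probabilities exactly $1-\delta$ and $1-\delta'$, constant in $y$; the reweighting factor then cancels in the normalization, your claimed factorization becomes exact, $\Pr[G] = (1-\delta)(1-\delta') = 1-(\delta+\delta'-\delta\delta')$ exactly, and the rest of your argument (chain rule with the supremum over $y$, the symmetric bound for the reverse divergence, post-processing) goes through verbatim. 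Note that this maneuver forces the events in Definition~\ref{def:approx_zCDP} to be read as living on an extended probability space (mechanism randomness plus auxiliary coins) rather than as subsets of the output space; your proof needs that reading anyway, since $G$ depends on the intermediate output $Y$, which $M''$ never reveals.
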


Next we show a conversion from approximate zCDP to approximate DP.
We rewrite the conversion lemma from \cite{bun2016concentrated} to be slightly more general (i.e. it uses an existing conversion from zCDP to approximate DP), and then apply the tight zCDP-to-approximate DP conversion given in \cite{canonne2020discrete}.
Overall, this gives a tighter approximate zCDP-to-approximate DP conversion than what is stated in Lemma 8.8 of \cite{bun2016concentrated}.

\begin{lemma}[Generalized version of Lemma 8.8 from \cite{bun2016concentrated}]\label{lem:approx_zCDP_to_DP_general}
  Suppose we can show that every mechanism that satisfies $\rho$-zCDP must satisfy $\epsilon^{*}(\rho), \delta^{*}(\rho)$-approximate DP.
  That is, ($\epsilon^{*}, \delta^{*}$) is a function converting a (pure) zCDP guarantee to an approximate DP guarantee.
  Suppose $\mech: \calx^N \to \caly$ satisfies $\delta$-approximate $\rho$-zCDP.
  Then, $\mech$ satisfies $(\epsilon^{*}(\rho), \delta + (1-\delta)\delta^{*}(\rho))$-DP.
\end{lemma}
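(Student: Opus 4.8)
The plan is to reduce the claim to the hypothesized pure zCDP-to-DP conversion applied to the conditional distributions guaranteed by Definition~\ref{def:approx_zCDP}, and then to \emph{lift} the resulting approximate DP bound from the conditionals back to the full output distributions. Fix an ordered pair of neighboring data sets $x, x'$ and write $P = \mech(x)$, $Q = \mech(x')$. Since $\mech$ is $\delta$-approximately $\rho$-zCDP, there exist events $E, E'$ with $\Pr[E], \Pr[E'] \ge 1-\delta$ such that the conditional distributions $P|_E$ and $Q|_{E'}$ satisfy $D_\alpha(P|_E \| Q|_{E'}) \le \rho\alpha$ and $D_\alpha(Q|_{E'}\|P|_E)\le\rho\alpha$ for all $\alpha \in (1,\infty)$.

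First I would apply the hypothesized conversion to the single pair $(P|_E, Q|_{E'})$. The conversion is stated for every mechanism satisfying $\rho$-zCDP, so I would instantiate it on the trivial mechanism $\N$ defined on a two-point domain $\{z, z'\}$ of neighbors with $\N(z) = P|_E$ and $\N(z') = Q|_{E'}$; the two R\'enyi bounds above are exactly the condition that $\N$ is $\rho$-zCDP. Hence $\N$, and therefore the pair $(P|_E, Q|_{E'})$, is $(\epsilon^{*}(\rho), \delta^{*}(\rho))$-DP, which gives, for every $Y \subseteq \caly$,
\begin{equation*}
\Pr[P|_E \in Y] \le e^{\epsilon^{*}(\rho)}\Pr[Q|_{E'}\in Y] + \delta^{*}(\rho).
\end{equation*}

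Next I would lift this to $P$ and $Q$. Decomposing over $E$ and its complement gives $\Pr[P\in Y] = \Pr[E]\Pr[P|_E\in Y] + \Pr[P\in Y \wedge \neg E] \le \Pr[E]\Pr[P|_E\in Y] + (1-\Pr[E])$. To turn $\Pr[Q|_{E'}\in Y]$ back into $\Pr[Q\in Y]$ I use only the one-sided estimate $\Pr[Q\in Y] \ge \Pr[E']\Pr[Q|_{E'}\in Y]$. Writing $a = \Pr[E]$, $b = \Pr[E']$, $u = \Pr[P|_E\in Y]$, $v = \Pr[Q|_{E'}\in Y]$, and feeding in the conditional DP bound $u \le e^{\epsilon^{*}(\rho)}v + \delta^{*}(\rho)$, these steps reduce the target inequality $\Pr[P\in Y]\le e^{\epsilon^{*}(\rho)}\Pr[Q\in Y] + \delta + (1-\delta)\delta^{*}(\rho)$ to the purely elementary claim that $au + (1-a) - e^{\epsilon^{*}(\rho)}bv \le \delta + (1-\delta)\delta^{*}(\rho)$ whenever $a,b\in[1-\delta,1]$, $u,v\in[0,1]$, and $u\le e^{\epsilon^{*}(\rho)}v + \delta^{*}(\rho)$.

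I expect this last elementary inequality to be the main obstacle, not because it is deep but because it is where the tight additive term $\delta + (1-\delta)\delta^{*}(\rho)$ is won or lost: naive bookkeeping (bounding $\Pr[E]\,\delta^{*}(\rho)\le \delta^{*}(\rho)$ and adding $\delta$ separately) only yields the looser $\delta + \delta^{*}(\rho)$, and one must also rule out a spurious $1/(1-\delta)$ blow-up of the multiplicative factor in the regime $\Pr[E] > \Pr[E']$. I would verify it by maximizing the left-hand side over the feasible region: the maximum is attained on the boundary $u = e^{\epsilon^{*}(\rho)}v + \delta^{*}(\rho)$, and a short case analysis on the sign of $a-b$ shows the optimum equals either $a\,\delta^{*}(\rho) + (1-a)$ (at $v=0$) or $1 - b + b\,\delta^{*}(\rho)$ (at $u=1$), each of which is at most $\delta + (1-\delta)\delta^{*}(\rho)$ precisely because $a, b \ge 1-\delta$ and $\delta^{*}(\rho)\le 1$ (both gaps factor as a nonnegative quantity times $1 - \delta^{*}(\rho)$). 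Finally, since the neighbor relation is symmetric and approximate zCDP supplies both R\'enyi directions, the same argument applied to the ordered pair $(x', x)$ yields the reverse inequality, establishing $(\epsilon^{*}(\rho), \delta + (1-\delta)\delta^{*}(\rho))$-DP.
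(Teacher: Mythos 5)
Your proof is correct, but there is no in-paper proof to compare it against: the paper states Lemma~\ref{lem:approx_zCDP_to_DP_general} without proof, presenting it as a generalization of Lemma~8.8 of \cite{bun2016concentrated} and implicitly deferring the argument to that reference. Your route is exactly the one that cited lemma rests on---condition on the events $E,E'$ of Definition~\ref{def:approx_zCDP}, apply the hypothesized pure-zCDP-to-DP conversion to the conditional pair, then un-condition---and the two places where this could go wrong both check out. First, the instantiation of the ``every mechanism'' hypothesis on a two-point mechanism is legitimate: taking, say, $\calx=\{0,1\}$ with singleton data sets and the mechanism mapping $0\mapsto P|_E$, $1\mapsto Q|_{E'}$, the $\rho$-zCDP condition of Definition~\ref{def:pure_zCDP} for both ordered neighbor pairs is precisely the pair of R\'enyi bounds \eqref{eq:approx_zCDP_1} and \eqref{eq:approx_zCDP_2} supplied by Definition~\ref{def:approx_zCDP}. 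Second, the lifting step is where the stated constant $\delta+(1-\delta)\delta^*(\rho)$ is won or lost, and your constrained maximization resolves it correctly: writing $\epsilon^*=\epsilon^*(\rho)$, $\delta^*=\delta^*(\rho)$ and using your notation, the maximum of $au+(1-a)-e^{\epsilon^*}bv$ over $u,v\in[0,1]$ with $u\le e^{\epsilon^*}v+\delta^*$ is attained at $(u,v)=(\delta^*,0)$ or at $(u,v)=(1,(1-\delta^*)/e^{\epsilon^*})$, with values $1-a(1-\delta^*)$ and $1-b(1-\delta^*)$ respectively, and the gaps to $\delta+(1-\delta)\delta^*$ factor as $(a-(1-\delta))(1-\delta^*)\ge 0$ and $(b-(1-\delta))(1-\delta^*)\ge 0$, as you say. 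This care is genuinely needed: naive bookkeeping yields only $\delta+\delta^*$, and converting $\Pr[Q|_{E'}\in Y]$ back to $\Pr[Q\in Y]$ by dividing by $\Pr[E']$ threatens a $1/(1-\delta)$ blow-up of the multiplicative factor when $\Pr[E]>\Pr[E']$, so your one-sided estimate $\Pr[Q\in Y]\ge \Pr[E']\Pr[Q|_{E'}\in Y]$ plus the boundary analysis is the right fix. The only nit is that you should state explicitly that $\delta^*(\rho)\le 1$ may be assumed without loss of generality---if $\delta^*(\rho)>1$ both the hypothesis and the conclusion are vacuous, since $(\epsilon,\delta)$-DP with $\delta\ge 1$ holds trivially---because your factorization uses $1-\delta^*\ge 0$.
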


\begin{lemma}[\cite{canonne2020discrete}, Proposition 7]\label{lem:zCDP_to_DP_tight}
  Suppose $\mech: \calx^N \to \caly$ satisfies $\rho$-zCDP.
  Then $\mech$ satisfies $(\epsilon, \delta)$-approximate DP for any $\epsilon > 0$ and
\begin{equation}
    \delta = \inf_{\alpha\in (1, \infty)}\frac{\exp((\alpha-1)(\alpha \cdot \rho-\epsilon))}{\alpha-1}\left(1 - \frac{1}{\alpha}\right)^{\alpha}.
\end{equation}
\end{lemma}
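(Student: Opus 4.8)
The plan is to reduce the $(\epsilon,\delta)$-DP guarantee to a bound on the hockey-stick divergence, and then to control that divergence using the single moment inequality that $\rho$-zCDP supplies. Fix neighboring datasets and write $P = \mech(x)$, $Q = \mech(x')$; since zCDP bounds $D_\alpha$ in both directions, it suffices to handle one ordering and the symmetric case follows identically. First I would invoke the standard fact that the smallest $\delta$ for which $\Pr[P \in Y] \le e^\epsilon \Pr[Q \in Y] + \delta$ holds for every event $Y$ equals the hockey-stick divergence $\sup_Y\left(\Pr[P\in Y] - e^\epsilon\Pr[Q\in Y]\right) = \sum_o\left(P(o) - e^\epsilon Q(o)\right)_+$, which rewrites as an expectation over the privacy-loss variable $Z = \log\frac{P(o)}{Q(o)}$ with $o \sim P$:
\begin{equation*}
\delta(\epsilon) = \mathbb{E}_{o\sim P}\left[\left(1 - e^{\epsilon - Z}\right)_+\right].
\end{equation*}

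Next I would extract from $\rho$-zCDP the only analytic input needed. Unpacking the definition of Rényi divergence shows $D_\alpha(P\|Q) = \frac{1}{\alpha-1}\log\mathbb{E}_{o\sim P}\left[e^{(\alpha-1)Z}\right]$, so the zCDP bound $D_\alpha(P\|Q) \le \alpha\rho$ is exactly the moment inequality $\mathbb{E}_{o\sim P}\left[e^{(\alpha-1)Z}\right] \le e^{(\alpha-1)\alpha\rho}$, valid for every $\alpha > 1$. The strategy is then to dominate the integrand $(1 - e^{\epsilon-z})_+$ pointwise by a scaled exponential $C\,e^{(\alpha-1)z}$ and take expectations, so that this moment bound immediately yields a bound on $\delta$.

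The technical heart, and the step I expect to be the main obstacle, is choosing the tightest constant $C = C(\alpha,\epsilon)$ for which $(1 - e^{\epsilon-z})_+ \le C\,e^{(\alpha-1)z}$ holds for all real $z$. Since the left side vanishes for $z \le \epsilon$ and increases to $1$ while the right side grows exponentially, the worst case is an interior maximum, so I would maximize $g(z) = (1 - e^{\epsilon-z})e^{-(\alpha-1)z}$ over $z > \epsilon$ by setting $g'(z) = 0$; this gives $e^{z^*} = \frac{\alpha}{\alpha-1}e^\epsilon$, and substituting back and simplifying yields $C(\alpha,\epsilon) = \frac{e^{-(\alpha-1)\epsilon}}{\alpha-1}\left(1 - \frac 1\alpha\right)^\alpha$.

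Combining the three pieces gives, for each fixed $\alpha > 1$,
\begin{equation*}
\delta \le C(\alpha,\epsilon)\cdot\mathbb{E}_{o\sim P}\left[e^{(\alpha-1)Z}\right] \le \frac{e^{(\alpha-1)(\alpha\rho - \epsilon)}}{\alpha-1}\left(1 - \frac 1\alpha\right)^\alpha,
\end{equation*}
and since $\alpha$ was arbitrary I would take the infimum over $\alpha \in (1,\infty)$ to recover the stated formula. The only remaining care is the case where $Z = +\infty$ on outcomes to which $Q$ assigns no mass: there the integrand is bounded by $1$, and such a set must have $P$-measure zero whenever $\rho < \infty$, since otherwise the moment $\mathbb{E}_{o\sim P}[e^{(\alpha-1)Z}]$ would be infinite and violate the zCDP bound; hence the exponential-domination argument remains valid.
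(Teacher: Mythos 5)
Your proposal is correct, but there is nothing in the paper itself to compare it against: the paper states this lemma as an imported result, citing Proposition~7 of Canonne, Kairouz, and McMillan~\cite{canonne2020discrete}, and never proves it --- it is only used, together with Lemma~\ref{lem:approx_zCDP_to_DP_general}, to obtain Corollary~\ref{cor:approx_zCDP_to_DP} for the experimental parameter conversions. What you have written is a valid, self-contained reconstruction of the cited result, following what is essentially the standard argument for this conversion: reduce the optimal $\delta$ to the privacy-loss expectation $\mathbb{E}_{o\sim P}\left[\left(1-e^{\epsilon-Z}\right)_+\right]$, read the zCDP hypothesis as the moment bound $\mathbb{E}_{o\sim P}\left[e^{(\alpha-1)Z}\right]\le e^{(\alpha-1)\alpha\rho}$ for each $\alpha>1$, and dominate the integrand pointwise by a multiple of $e^{(\alpha-1)z}$. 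Your optimization of the constant checks out: on $z>\epsilon$ the function $g(z)=\left(1-e^{\epsilon-z}\right)e^{-(\alpha-1)z}$ vanishes at both ends of the interval, its unique critical point satisfies $e^{z^*}=\frac{\alpha}{\alpha-1}e^{\epsilon}$, whence $1-e^{\epsilon-z^*}=\frac{1}{\alpha}$ and
\begin{equation*}
  C(\alpha,\epsilon)=\frac{1}{\alpha}\left(\frac{\alpha-1}{\alpha}\right)^{\alpha-1}e^{-(\alpha-1)\epsilon}
  =\frac{e^{-(\alpha-1)\epsilon}}{\alpha-1}\left(1-\frac{1}{\alpha}\right)^{\alpha},
\end{equation*}
so $\delta\le C(\alpha,\epsilon)\,e^{(\alpha-1)\alpha\rho}$ for every $\alpha>1$, and taking the infimum over $\alpha$ yields exactly the stated formula. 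The two boundary issues are also handled correctly: the set where $Q$ vanishes must be $P$-null by finiteness of the moment (and the pointwise bound holds trivially there anyway), and the symmetric ordering of neighboring datasets is covered because the zCDP hypothesis applies to both ordered pairs.
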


Combining Lemma~\ref{lem:approx_zCDP_to_DP_general} and Lemma~\ref{lem:zCDP_to_DP_tight} gives us the following.

\begin{corollary}\label{cor:approx_zCDP_to_DP}
  Suppose $\mech: \calx^N \to \caly$ satisfies $\delta$-approximate $\rho$-zCDP.
  Then $\mech$ satisfies $(\epsilon, \delta + (1-\delta)\delta')$-approximate DP for any $\epsilon > 0$ and
\begin{equation}
    \delta' = \inf_{\alpha\in (1, \infty)}\frac{\exp((\alpha-1)(\alpha\cdot\rho-\epsilon))}{\alpha-1}\left(1 - \frac{1}{\alpha}\right)^{\alpha}.
\end{equation}
\end{corollary}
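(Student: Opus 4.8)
The plan is to instantiate the generic conversion template of Lemma~\ref{lem:approx_zCDP_to_DP_general} with the explicit, tight conversion supplied by Lemma~\ref{lem:zCDP_to_DP_tight}; the corollary then follows by direct substitution rather than by any new argument. In other words, the entire analytic content already lives in the two preceding lemmas, and what remains is to check that they compose cleanly with matching quantifiers.

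First I would fix an arbitrary $\epsilon > 0$ and read Lemma~\ref{lem:zCDP_to_DP_tight} as defining a concrete pure-zCDP-to-approximate-DP conversion. Concretely, set $\epsilon^{*}(\rho) := \epsilon$ and
\begin{equation*}
    \delta^{*}(\rho) := \inf_{\alpha\in (1, \infty)}\frac{\exp((\alpha-1)(\alpha\cdot\rho-\epsilon))}{\alpha-1}\left(1 - \frac{1}{\alpha}\right)^{\alpha}.
\end{equation*}
Lemma~\ref{lem:zCDP_to_DP_tight} asserts precisely that every mechanism satisfying $\rho$-zCDP also satisfies $(\epsilon^{*}(\rho), \delta^{*}(\rho))$-approximate DP, which is exactly the hypothesis on the conversion pair $(\epsilon^{*}, \delta^{*})$ required by Lemma~\ref{lem:approx_zCDP_to_DP_general}.

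Next I would apply Lemma~\ref{lem:approx_zCDP_to_DP_general} to this pair $(\epsilon^{*}, \delta^{*})$. Since $\mech$ is assumed to satisfy $\delta$-approximate $\rho$-zCDP, the conclusion of that lemma gives that $\mech$ satisfies $(\epsilon^{*}(\rho), \delta + (1-\delta)\delta^{*}(\rho))$-DP. Substituting $\epsilon^{*}(\rho) = \epsilon$ and $\delta^{*}(\rho) = \delta'$ recovers the claimed $(\epsilon, \delta + (1-\delta)\delta')$-approximate DP guarantee.

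The only point needing care---and the closest thing here to an obstacle---is the bookkeeping of the quantifier over $\epsilon$. Lemma~\ref{lem:zCDP_to_DP_tight} holds for every $\epsilon > 0$, so the instantiation above is legitimate for each fixed choice of $\epsilon$; since $\epsilon$ was arbitrary, the resulting guarantee holds for all $\epsilon > 0$, exactly as stated in the corollary. No genuine analytic work is needed in the corollary itself: the substance resides in the reduction underlying Lemma~\ref{lem:approx_zCDP_to_DP_general} and in the tight divergence optimization underlying Lemma~\ref{lem:zCDP_to_DP_tight}, and the corollary is purely their composition.
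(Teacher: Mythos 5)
Your proposal is correct and matches the paper exactly: the paper states the corollary as the immediate combination of Lemma~\ref{lem:approx_zCDP_to_DP_general} and Lemma~\ref{lem:zCDP_to_DP_tight}, which is precisely the instantiation you spell out (fixing $\epsilon$, taking $\epsilon^{*}(\rho)=\epsilon$ and $\delta^{*}(\rho)$ as the infimum expression, and then applying the generic lemma). Your care with the quantifier over $\epsilon$ is the right detail to check, and it goes through as you describe.
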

}

A common primitive in building private algorithms, the Gaussian Mechanism, satisfies $\rho$-zCDP.

\begin{definition}[Gaussian Distribution] The Gaussian distribution with parameter $\sigma$ and mean 0, denoted $\N(0,\sigma^2)$ is defined for all $\ell \in \mathbb{R}$ and has probability density
\begin{equation*}
    h(\ell) = \frac{1}{\sigma \sqrt{2\pi}} e^{-\frac{\ell^2}{2\sigma^2}}.
\end{equation*}
\end{definition}

\begin{definition}[$\ell_2$-Sensitivity] Let $f: \universe^n \to \mathbb{R}^d$ be a function. Its $\ell_2$-sensitivity is
\begin{equation*}
    \Delta_f = \max_{
    \substack{x, x'\in \universe\\
    x, x' neighbors}
    } \| f(x) - f(x') \|_2.
\end{equation*}
\end{definition}

\begin{definition}[Gaussian Mechanism~\cite{bun2016concentrated}]\label{def:gauss_mech}
Let $f:\universe^n \to \mathbb{R}^d$ be a function with $\ell_2$-sensitivity $\Delta_f$. Then the Gaussian mechanism is the algorithm
\begin{equation*}
    \mech_f(x) = f(x) + (Z_1, \ldots, Z_d),
\end{equation*}
where $Z_i \sim \N\left(0, \frac{\Delta_f^2}{2\rho}\right)$. Algorithm $\mech_f$ satisfies $\rho$-zCDP.
\end{definition}


\section{Prior Approaches to Partition Selection}\label{sec:prior}

In this section we discuss three existing algorithms for differentially private partition selection. We begin with the naive algorithm, called Weighted Gaussian, in Section~\ref{sec:naive}. In Section~\ref{sec:greedy_algs}, we present Policy Gaussian~\cite{gopi2020differentially} and Greedy updates Without sampling~\cite{carvalho2022incorporating}.

The algorithms all have three main steps: first, they compute a \emph{weighted histogram}, which is simply a mapping from an item $u \in \universe$ to a weight $H[u] \in \mathbb{R}$; second, they add calibrated noise to each item in the histogram; and lastly, they release items that are above some appropriately-chosen threshold $\thresh$. The primary difference between the algorithms is in how they compute the weighted histogram. This high-level algorithm for private partition selection is described in Algorithm~\ref{alg:high_level}, which can be composed with different weighted histogram algorithms.

\begin{algorithm}
    \caption{High-Level Partition Selection Algorithm}
    \label{alg:high_level}
    \begin{FlushLeft}
    \textbf{Input:} Data set of user partitions $x= (W_1, \ldots, W_{N})$\\
    \hspace{3em}Weighted Histogram Algorithm \wthist \\
    \hspace{3em}Threshold \thresh\\
    \hspace{3em}Noise distribution \noisedistr\\ 
    \textbf{Output:} Partitions $S \subseteq \cup_i W_i$
     \end{FlushLeft}
    \begin{algorithmic}[1] 
            \State Initialize empty set $S \gets \{\}$
            \State $H \gets \wthist(x)$ \Comment{Compute weighted histogram}
            \For{$u\in Supp(H)$} 
                \State $Z_u \sim \noisedistr$
                \State $\hat{H}[u]\gets H[u] + Z_u$
                \If{$\hat{H}[u] \geq \thresh$}
                    \State $S \gets S \cup \{u\}$
                \EndIf
            \EndFor
            \Return $S$
    \end{algorithmic}
\end{algorithm}

\subsection{Baseline: Weighted Gaussian}\label{sec:naive}

The Weighted Gaussian algorithm (Algorithm~\ref{alg:naive}) is one of the simplest and first algorithms for private partition selection \cite{korolova2009releasing}. To build a weighted histogram, the algorithm first pre-processes the data set by removing any duplicates within a user's set and truncating each user's set to have at most $\maxcontrib$ items. Then, the users compute a histogram with bounded $\ell_2$-sensitivity as follows: each user $i$ updates the weight $H[u]$ for each of the items $u$ in their set $W_i$ with the following rule:
\begin{equation*}
    H[u] \gets H[u] + \frac{1}{\sqrt{|W_i|}}.
\end{equation*}
 The resulting weighted histogram has an $\ell_2$-sensitivity of 1, so it can be composed with the high-level algorithm using calibrated Gaussian noise and an appropriate threshold to ensure that the overall algorithm satisfies $\delta$-approximate $\rho$-zCDP. In the statement of Algorithm~\ref{alg:naive}, $\Phi(\cdot)$ is used to denote the cumulative density function of the standard Gaussian distribution and $\Phi^{-1}(\cdot)$ is its inverse.
 
 \begin{algorithm}
    \caption{Weighted Gaussian}
    \label{alg:naive}
    \begin{FlushLeft}
    \textbf{Input:} Data set $x= (W_1, \ldots, W_{N})$\\
    \hspace{3em}Privacy parameters $(\rho, \delta)$\\
    \hspace{3em}Maximum per-user contribution \maxcontrib\\
    \textbf{Output:} Partitions $S \subseteq \cup_i W_i$
     \end{FlushLeft}
    \begin{algorithmic}[1] 
            \State Initialize empty histogram $H \gets \{\}$
            \State Initialize empty set $S \gets \{\}$
            \For {$i=1, \ldots, N$} 
                \State $W_i \gets$ get rid of duplicate items from $W_i$
                \State $\overline{W_i} \gets$ uniformly sample at most \maxcontrib items from $W_i$
                \For {$u \in \overline{W}_i$} 
            \State $H[u] \gets H[u] + \frac{1}{\sqrt{|\overline{W}_i|}}$
                \EndFor
            \EndFor
            \State $\sigma \gets \frac{1}{\sqrt{2\rho}}$
            \State $\thresh \gets \max_{k \in [\maxcontrib]} \left\{ \frac{1}{\sqrt{k}} + \sigma \cdot \Phi^{-1}\left((1-\delta)^{1/k}\right) \right \} $
            \For{$u\in Supp(H)$} \
                \State $\hat{H}[u]\gets H[u] + \N(0, \frac{1}{2\rho})$
                \If{$\hat{H}[u] \geq \thresh$}
                    \State $S \gets S \cup \{u\}$
                \EndIf
            \EndFor
            \Return $S$
    \end{algorithmic}
\end{algorithm}


\begin{theorem}\label{thm:naive_priv}
Fix any $\rho >0$, any $\delta \in (0,1)$, and any $\maxcontrib \in \mathbb{N}$. The Weighted Gaussian algorithm (Algorithm~\ref{alg:naive}) satisfies $\delta$-approximate $\rho$-zCDP.
\end{theorem}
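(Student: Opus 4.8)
The plan is to reduce the analysis to the $\rho$-zCDP guarantee of the Gaussian mechanism (Definition~\ref{def:gauss_mech}) applied to the items that two neighboring data sets \emph{share}, and to use the escape-hatch events of approximate zCDP (Definition~\ref{def:approx_zCDP}) to absorb the items that appear only in the larger data set. Fix neighboring data sets and, without loss of generality, assume $x = x' \cup \{W_{i^*}\}$, so that $x$ contains one extra user whose processed (deduplicated and subsampled) contribution is $\overline{W}_{i^*}$ with $k := |\overline{W}_{i^*}| \le \maxcontrib$ items, each receiving weight $1/\sqrt{k}$. If $k = 0$ the two histograms coincide and there is nothing to prove, so I assume $k \ge 1$; the reverse ordering of the pair is handled symmetrically by swapping the roles of the events defined below.

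First I would establish the sensitivity structure. Since adding a user only increases weights, $Supp(H_{x'}) \subseteq Supp(H_x)$, and the \emph{new} items $N := Supp(H_x)\setminus Supp(H_{x'})$ are exactly the coordinates of $\overline{W}_{i^*}$ that no other user contributed; on each of them $H_x[u] = 1/\sqrt{k}$ while $H_{x'}[u]=0$. On the common support $Supp(H_{x'})$, the two histograms differ only on $\overline{W}_{i^*}\cap Supp(H_{x'})$, which has at most $k$ coordinates, each differing by $1/\sqrt{k}$; hence the restricted histograms differ by at most $\sqrt{k\cdot(1/k)} = 1$ in $\ell_2$. Consequently, the Gaussian mechanism with per-coordinate noise $\N(0, 1/(2\rho))$ on the common support has $\ell_2$-sensitivity at most $1$ and satisfies $\rho$-zCDP, and thresholding its noisy output is post-processing.

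Next I would define the events. Let $E = E(\mech(x))$ be the event that $\mech(x)$ releases no item of $N$, and let $E' = E'(\mech(x'))$ be the event that $\mech(x')$ releases no item of $N$; the latter holds with probability $1$, because items of $N$ are not even in $Supp(H_{x'})$. To lower-bound $\Pr[E]$, observe that each $u \in N$ is released iff $1/\sqrt{k} + Z_u \ge \thresh$ with $Z_u \sim \N(0,\sigma^2)$, and the definition of $\thresh$ (a maximum over $k\in[\maxcontrib]$, which includes the true $k$) forces this probability to be at most $1 - (1-\delta)^{1/k}$. Because the $Z_u$ are independent and $|N| \le k$, I would conclude $\Pr[E] \ge (1-\delta)^{|N|/k} \ge 1-\delta$, using that $t \mapsto (1-\delta)^t$ is decreasing.

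Finally I would combine the pieces. The output of $\mech(x)$ splits into the items it releases from $N$ and those it releases from $Supp(H_{x'})$; these two parts are functions of disjoint, independent blocks of Gaussian noise, so $E$ (a function of the $N$-block alone) is independent of the common-support part. Conditioning on $E$ therefore leaves the distribution of the common-support releases unchanged, and $\mech(x)|_E$ equals in distribution the thresholded Gaussian mechanism on $H_x$ restricted to $Supp(H_{x'})$; likewise $\mech(x')|_{E'} = \mech(x')$ is the thresholded Gaussian mechanism on $H_{x'}$ restricted to the same support. Both R\'enyi-divergence conditions of Definition~\ref{def:approx_zCDP} then follow from the $\rho$-zCDP guarantee of the common-support Gaussian mechanism together with post-processing, while the two probability conditions are supplied by the previous paragraph. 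The main obstacle is exactly the support mismatch handled here: a new user can introduce items absent from $x'$ altogether, which no pure $\rho$-zCDP bound can cover; the role of $\delta$ and of the per-$k$ calibrated threshold is precisely to bound the chance that such an item escapes, and the independence of the Gaussian noise across items is what makes the conditioning step clean.
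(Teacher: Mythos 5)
Your proof is correct and follows essentially the same route as the paper's: you define the escape event as the release of an item unique to the extra user, bound its probability by $\delta$ via the per-$k$ calibrated threshold together with independence of the Gaussian noise, and reduce the conditional distributions to the Gaussian mechanism on the shared support plus post-processing. If anything, your write-up is more explicit than the paper's own proof---you spell out the independence argument that justifies the conditioning, verify both R\'enyi-divergence directions required by Definition~\ref{def:approx_zCDP}, and use the correct per-item weight $1/\sqrt{k}$ for the truncated set where the paper loosely writes $1/\sqrt{|W'|}$.
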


See Appendix~\ref{app:naive_priv} for the proof of Theorem~\ref{thm:naive_priv}.

The Weighted Gaussian algorithm benefits from being highly scalable. In particular, it lends itself well to parallel computation across several computers within a cluster, because the weights on each histogram item can be computed in parallel as well as the noise addition and thresholding steps (see Figure~\ref{fig:parallelism}). Thus, Algorithm~\ref{alg:naive} is the standard approach for doing partition selection on data sets that are too large to fit in a single machine's memory. Unfortunately, Weighted Gaussian suffers from poor accuracy compared to the greedy approaches we discuss next.

\subsection{Greedy Approaches}\label{sec:greedy_algs} 
One problem with the Weighted Gaussian algorithm is users waste their sensitivity budget on histogram items that are already well above the threshold. Most real-world data has highly skewed item frequencies, but Weighted Gaussian increments all items in a user's set by the same amount. 

The two greedy algorithms we discuss next solve this problem by iterating through the users one-by-one and using an \emph{update policy} and the current state of the histogram to decide how to allocate weight across the items in their set. That is, each user's update depends on previous users' updates. For example, in both algorithms, users do not contribute to items in $H$ that have already reached $\thresh^*$, the threshold $\thresh$ plus some positive buffer; items that have reached the buffered threshold are very likely to be returned  after noise is added and thus do not need more weight. These adaptive update rules are carefully chosen so that the overall algorithm has bounded global sensitivity. As we will discuss in later sections, the main downside of these algorithms is their sequential nature. By design, they require iterating over the entire data set, which may be prohibitively slow for industrial data sets.


\subsubsection{Policy Gaussian~\cite{gopi2020differentially} \revision{(DPSU)}} As with the Weighted Gaussian algorithm, the data set is pre-processed by removing duplicate items from each user's set and truncating the user sets to some fixed maximum size $\maxcontrib$. Then, the algorithm iterates sequentially over the users and, for each item $u$ in user $i$'s set $W_i$, the user increments $H[u]$ by a weight that is proportional to $\thresh^* - H[u]$, where $\thresh^*$ is equal to $\thresh$ plus some positive buffer. Essentially, \emph{items that are further from the buffered threshold get more weight added to them, and those that have already reached the buffered threshold get none}. The weight that a user adds to each item is normalized so a single user's update to $H$ has an $\ell_2$-norm of at most 1.\revision{We refer to this algorithm as DPSU.}

Gopi et al.~prove that the global $\ell_2$-sensitivity of the entire Policy Gaussian algorithm is bounded by 1, so applying the high-level algorithm (Algorithm~\ref{alg:high_level}) with appropriately-scaled Gaussian noise to each item and thresholding are sufficient for satisfying differential privacy. 


\subsubsection{Greedy updates Without sampling (GW)~\cite{carvalho2022incorporating}} Carvalho et al.~observe that, rather than removing duplicate items in a user's set, one can use this frequency information to decide where to allocate sensitivity budget. 
GW iterates over the users, and each user computes $u^*$: the most frequent item in their list such that $H[u^*]$ is below the buffered threshold $\thresh^*$. Then, the user increments $H[u^*]$ by  $\min(1, \thresh^* - H[u^*], budget)$ where $budget$ is the user's remaining $\ell_1$ budget. This process is repeated until the user's initial budget of 1 is consumed or the user has no items left that are below $\thresh^*$\footnote{As the algorithm is presented in \cite{carvalho2022incorporating}, it does not always terminate. In particular, they do not consider the case when a user has budget left but all of its items have reached $\thresh^*$, but adding this condition luckily does not affect the privacy proof.}. Carvalho et al.~prove that their GW algorithm for building a weighted histogram has a global $\ell_1$-sensitivity bound of 1, so running the high-level algorithm (Algorithm~\ref{alg:high_level}) with Laplace noise and thresholding is sufficient for satisfying differential privacy. 

One notable feature of their algorithm is that it can use item frequency information from a public data set to increase the accuracy of the frequency estimates. We do not use this feature when comparing it to other algorithms, since our goal is to create a general-purpose algorithm that could be incorporated into a privacy framework without requiring a data analyst to input a public data set.

\section{DP-SIPS}\label{sec:sips}

\begin{figure*}
    \centering
    \includegraphics[page=2, trim={0cm 0cm 0cm 9.55cm},clip, width=469pt]{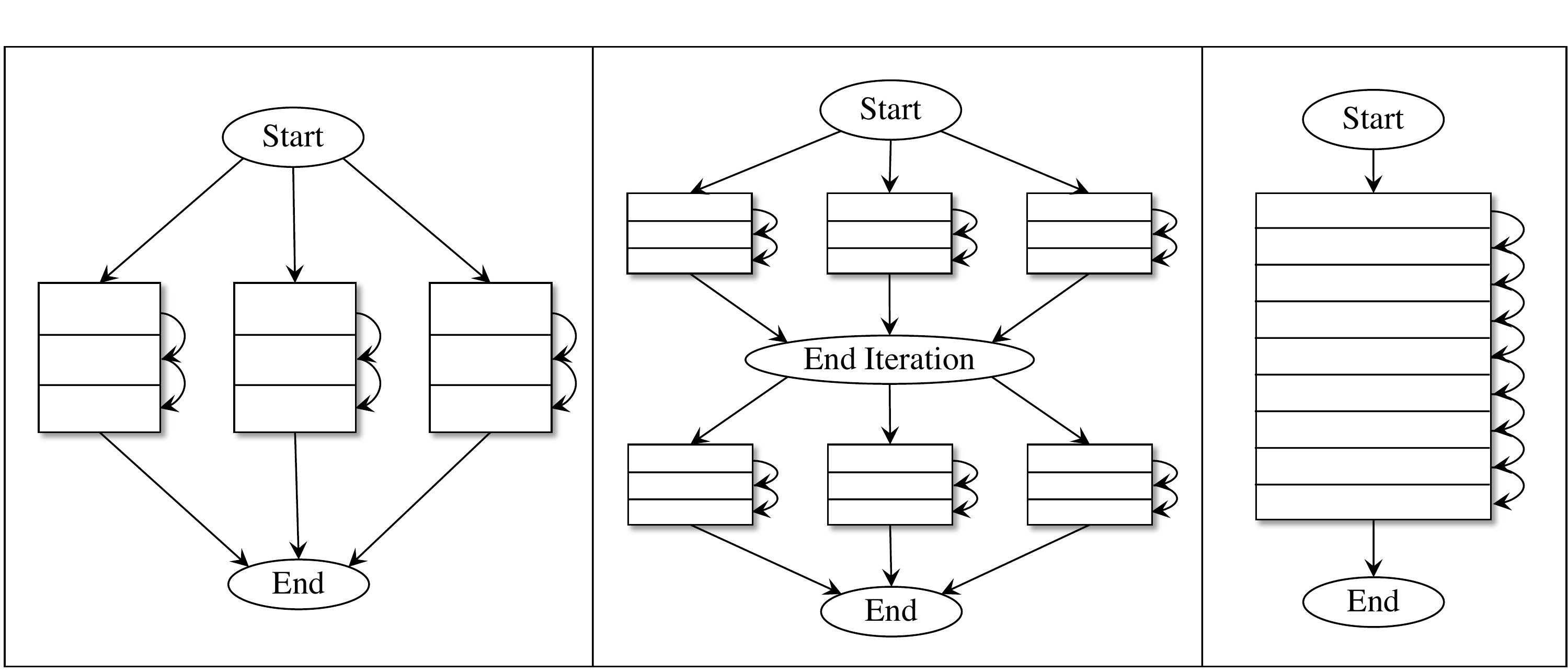}
    \caption{Depiction of Weighted Gaussian noisy histogram (left) compared to intermediate SIPS noisy histograms (right three diagrams) on a skewed data set. Solid blue bars represent partitions that will be returned, and yellow outlines represent the weight of each item on the previous iteration. Although the threshold for Weighted Gaussian is lower than each of the SIPS thresholds, SIPS benefits from less-frequent items getting increased weight as returned partitions are removed from user sets.}
    \label{fig:histograms}
\end{figure*}

In this section we present DP-SIPS: Differentially Private Scalable, Iterative Partition Selection, detailed in Algorithm~\ref{alg:1}. The basic structure is quite simple: it runs Weighted Gaussian on the data set multiple times with increasing privacy budget (and corresponding decreasing thresholds); on each iteration, the partitions returned by Weighted Gaussian are removed from each of the users' sets.

Because Weighted Gaussian updates the histogram uniformly across a user's items, when returned partitions are removed from the user's set, they can allocate more weight to their remaining items (see Figure~\ref{fig:histograms}). The first iteration returns the very popular items using only a tiny fraction of the overall privacy budget, and subsequent iterations yield less frequent items. The action of the algorithm is twofold: \emph{in each iteration, the threshold is lowered at the same time as users allocate more weight to each of the items that remain in their sets} (since the user sets get smaller when previously-returned items are removed).

Furthermore, the user sets are re-truncated on each iteration after the previously returned partitions are removed. For data sets that are both skewed in the item frequencies and in the sizes of users' sets, this allows additional items to be included on each iteration.


\begin{algorithm}
    \caption{DP-SIPS: Scalable, Iterative Partition Selection}
    \label{alg:1}
    \begin{FlushLeft}
    \textbf{Input:} 
    Data set $x= (W_1, \ldots, W_{N})$\\
    \hspace{3em}Maximum per-user contribution \maxcontrib\\
    \hspace{3em}Privacy parameters $\rho, \delta$\\
    \hspace{3em}Number of iterations $I$\\
    \hspace{3em}Privacy budget allocation factor $r>0$\\
    \textbf{Output:} Subset $S \subseteq \cup_i W_i$
    \end{FlushLeft}
    \begin{algorithmic}[1] 
            \State $S \gets\{\}$ 
            \For {$i=0, \ldots, I-1$}
                \revision{\If{$r=1$}
                    \State $(\rho_i, \delta_i) \gets \left(\frac{\rho}{I}, \frac{\delta}{I} \right)$ 
                \Else 
                    \State $(\rho_i, \delta_i) \gets \left(\rho \cdot r^{I-i-1} \cdot \frac{1-r}{1-r^I}, \quad \delta \cdot r^{I-i-1} \cdot \frac{1-r}{1-r^I} \right)$ 
                \EndIf
                }
                \State $P_i \gets \naive\left(x=(W_1, \ldots, W_N), \rho_i, \delta_i, \maxcontrib\right)$
                \For {$j \in N$}
                    \State $W_j \gets W_j \setminus P_i$ \Comment{Remove already-found partitions}
                \EndFor
                \State $S \gets S \cup P_i$
            \EndFor
            \Return $S$
    \end{algorithmic}
\end{algorithm}

\begin{theorem}
    For any $\rho > 0$ and any $\delta \in (0,1]$, Algorithm~\ref{alg:1} satisfies $\delta$-approximate $\rho$-zCDP.
\end{theorem}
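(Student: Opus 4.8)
The plan is to view DP-SIPS as an adaptive composition of $I$ runs of Weighted Gaussian and then invoke the approximate-zCDP composition lemma (Lemma~\ref{lem:comp}) together with the per-iteration guarantee of Theorem~\ref{thm:naive_priv}. The crucial structural observation is that the only thing happening between iterations---removing the discovered partitions $P_i$ from every user's set via $W_j \gets W_j \setminus P_i$---is simultaneously (a) a deterministic post-processing of the previous outputs and (b) a \emph{per-user} transformation of the data. Because it acts on each user independently, it maps neighboring data sets to neighboring data sets: if $x$ and $x'$ differ only in user $i^*$'s list, then deleting any fixed set of partitions from each user leaves data sets that still differ only in user $i^*$'s (now possibly shorter) list.

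First I would set up the adaptive composition formally. Fix the outputs $P_0, \ldots, P_{i-1}$ of the earlier iterations; conditioned on these, iteration $i$ computes $\naive(x^{(i)}, \rho_i, \delta_i, \maxcontrib)$, where $x^{(i)}$ is obtained from $x$ by deleting $\bigcup_{k<i} P_k$ from each user. By the neighbor-preservation observation, the map $x \mapsto x^{(i)}$ sends neighbors to neighbors, so pre-composing Weighted Gaussian with it and applying Theorem~\ref{thm:naive_priv} shows that, as a function of $x$ with the previous outputs held fixed, iteration $i$ satisfies $\delta_i$-approximate $\rho_i$-zCDP \emph{for every} fixed $(P_0,\ldots,P_{i-1})$. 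This is exactly the hypothesis required by Lemma~\ref{lem:comp}, so I would induct on $i$, treating the composed first $i-1$ iterations as the mechanism $M$ and iteration $i$ as the mechanism $M'$. Composing the first $i$ iterations then yields a mechanism that is $\big(1 - \prod_{k<i}(1-\delta_k)\big)$-approximate $\big(\sum_{k<i}\rho_k\big)$-zCDP, using that the lemma's $\delta$-combination $\delta + \delta' - \delta\delta' = 1-(1-\delta)(1-\delta')$ accumulates to $1 - \prod_k(1-\delta_k)$.

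It then remains to check the budget arithmetic. The $\rho_i$ and $\delta_i$ are allocated by the same geometric weights, so a short geometric-series computation gives $\sum_{i=0}^{I-1}\rho_i = \rho$ and $\sum_{i=0}^{I-1}\delta_i = \delta$ in both the $r=1$ and $r\neq 1$ branches (when $r\neq 1$ the factor $\tfrac{1-r}{1-r^I}$ is precisely the normalizer making $\sum_i r^{I-i-1}\tfrac{1-r}{1-r^I} = 1$). For the accumulated failure probability I would apply the Weierstrass/union-bound inequality $1 - \prod_i(1-\delta_i) \le \sum_i \delta_i = \delta$, so the composition is $\delta'$-approximate $\rho$-zCDP for some $\delta' \le \delta$. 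Since approximate zCDP only weakens as $\delta$ grows (the same witnessing events $E,E'$ satisfy $\Pr[E]\ge 1-\delta' \ge 1-\delta$ in Definition~\ref{def:approx_zCDP}), this upgrades to $\delta$-approximate $\rho$-zCDP. Finally, the returned set $S=\bigcup_i P_i$ is a post-processing of the tuple of outputs, so it inherits the guarantee.

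The main obstacle is not any hard calculation but getting the framing right: one must argue carefully that the inter-iteration data modification genuinely fits the adaptive composition template of Lemma~\ref{lem:comp}---that it is a legitimate post-processing of prior outputs whose induced map on data sets preserves the neighboring relation---so that Theorem~\ref{thm:naive_priv} can be applied verbatim at each step and the ``for all $y$'' hypothesis of the composition lemma is actually met. Everything after that (the telescoping product for $\delta$, the geometric series for $\rho$ and $\delta$, and the union bound) is routine.
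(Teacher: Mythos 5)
Your proof is correct and follows essentially the same route as the paper's: per-iteration privacy from Theorem~\ref{thm:naive_priv}, adaptive composition via Lemma~\ref{lem:comp}, and the geometric-series check that the budgets sum to $(\rho,\delta)$. You are in fact more careful than the paper on two points it glosses over---explicitly verifying that the inter-iteration removal map $W_j \mapsto W_j \setminus P_i$ preserves the neighboring relation (so the per-iteration guarantee holds for every fixed prefix of outputs, as Lemma~\ref{lem:comp} requires), and tracking the exact accumulated failure probability $1-\prod_i(1-\delta_i)$ before bounding it by $\sum_i \delta_i = \delta$.
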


\begin{proof}
    By Theorem~\ref{thm:naive_priv}, each call to \naive satisfies $\delta_i$-approximate $\rho_i$-zCDP. Applying composition and postprocessing (\revision{Lemma~\ref{lem:comp}}), Algorithm~\ref{alg:1} satisfies $\left(\sum_{i=0}^{I-1} \delta_i\right)$-approximate $\left(\sum_{i=0}^{I-1} \rho_i\right)$-zCDP.
    
    \revision{If $r=1$, then clearly $\sum_{i=0}^{I-1} \rho/I = \rho$ and similarly $\sum_{i=0}^{I-1} \delta/I = \delta$.}
    
    Now, for $r\neq 1$ we will solve for these summations using the closed-form formula for geometric sums. 
    \begin{align*}
        \sum_{i=0}^{I-1} \rho_i
         & = \sum_{i=0}^{I-1} \rho \cdot r^{I-i-1} \cdot \frac{1-r}{1-r^I} 
          = \rho \cdot \frac{1-r}{1-r^I} \cdot \sum_{i=0}^{I-1}  r^{I-i-1} \\
         & = \rho \cdot \frac{1-r}{1-r^I} \cdot \sum_{j=0}^{I-1}  r^j 
          = \rho \cdot \frac{1-r}{1-r^I} \cdot \frac{1-r^I}{1-r} 
          = \rho.
    \end{align*}
    An identical calculation holds for $\sum_{i=0}^{I-1} \delta_i$. Thus, Algorithm~\ref{alg:1} satisfies $\delta$-approximate $\rho$-zCDP.
\end{proof}

A key advantage of DP-SIPS is that it is massively parallelizable: within each iteration, the mechanism performs a simple groupby-count operation, which can be distributed among multiple cores or a cluster of multiple machines.
DP-SIPS requires multiple iterations, so it takes longer to run than the naive algorithm.
However, only a few iterations are required to achieve good accuracy (see Section~\ref{sec:hyperparameters}), so its runtime is still reasonable even on large datasets.
A visual explanation of the scalability property of DP-SIPS in comparison with other approaches can be found in Figure~\ref{fig:parallelism}, and an experimental performance and scalability comparison can be found in Section~\ref{sec:scalability}.

\revision{

\paragraph{Relation to Algorithm 7 in \cite{kamath2019privately}}

Private Product-Distribution Estimator in \cite{kamath2019privately} privately estimates a product of $d$ Bernoulli distributions, and has some structural similarities to DP-SIPS: it also uses weighted Gaussian iteratively with decreasing thresholds, removing items above the threshold in each round to estimate the frequencies of each coordinate. Aside from the problem setting (bounded vs. unbounded domain) and goal (distribution estimation vs. partition selection), a major difference is that \cite{kamath2019privately} partitions the dataset horizontally into $\log(d/2)$ groups, and uses different subsets of user records on each round. This partitioning step is key to their privacy and accuracy arguments and leads to $\log(d/2)$ rounds rather than the constant number of rounds that DP-SIPS uses.

}

\revision{

\subsection{Scalability Analysis in the MapReduce Framework}

We analyze DP-SIPS in the MapReduce model of distributed computing to better understand its scaling behavior on a cluster. The model proceeds in rounds that are synchronized across worker nodes running in parallel. The data records begin arbitrarily partitioned among the worker nodes. Each round begins with a \emph{map} phase in which the worker node applies a map to each record in its partition. Next, the records are \emph{shuffled} among the workers so that records with the same key are located on the same worker node. Lastly, each worker processes the records in the \emph{reduce} phase. The three phases (map, shuffle, and reduce) constitute one round in the MapReduce framework. Typically, the shuffle phase is the most time intensive since large amounts of data must move among the worker nodes, so the overall number of rounds usually reflects an algorithm's efficiency on a real system.

DP-SIPS proceeds in the following steps (see Figure~\ref{fig:mapreduce}):

\begin{enumerate}
    \item Shuffle records by \texttt{user\_id} \label{item:returnto}
    \item Bound the number of contributions per user
    \item Shuffle records by \texttt{item}. Call this dataset $D$.
    \item Count the number of items, add noise, and store the set of items $S$ above the threshold 
    \item Do a join to remove items in $S$ from $D$. The records now consist of $D\setminus S$. If this is the last iteration, this step can be skipped.
    \item Return to Step \ref{item:returnto} for $I$ iterations in total.
\end{enumerate} 

All but the last iteration of DP-SIPS involves three MapReduce rounds: first shuffling records by user to truncate; then shuffling by item to count, add noise, and threshold; lastly, doing a join with the original data to remove items that were released. In the last iteration, the join is not necessary. So, in total the algorithm does $3(I-1) + 2$ rounds in the MapReduce framework, where $I$ is the number of iterations of DP-SIPS. In our experiments we set $I=3$, so DP-SIPS performs 8 MapReduce rounds. Weighted Gaussian only uses 2 MapReduce rounds, since $I=1$. The greedy algorithms cannot be parallelized--specifically the greedy methods for histogram computation--so each record must be loaded into the one worker node to make its contribution to the histogram. This requires $O(N)$ rounds of MapReduce, where $N$ is the number of users. Indeed, our experimental results in Section~\ref{sec:scalability} (Figures~\ref{tab:scalability} and \ref{tab:scalability2}) show that the runtime of DP-SIPS decreases with the number of cores while the greedy algorithms stay constant.

\begin{figure}
    \centering
    \includegraphics[width=241pt]{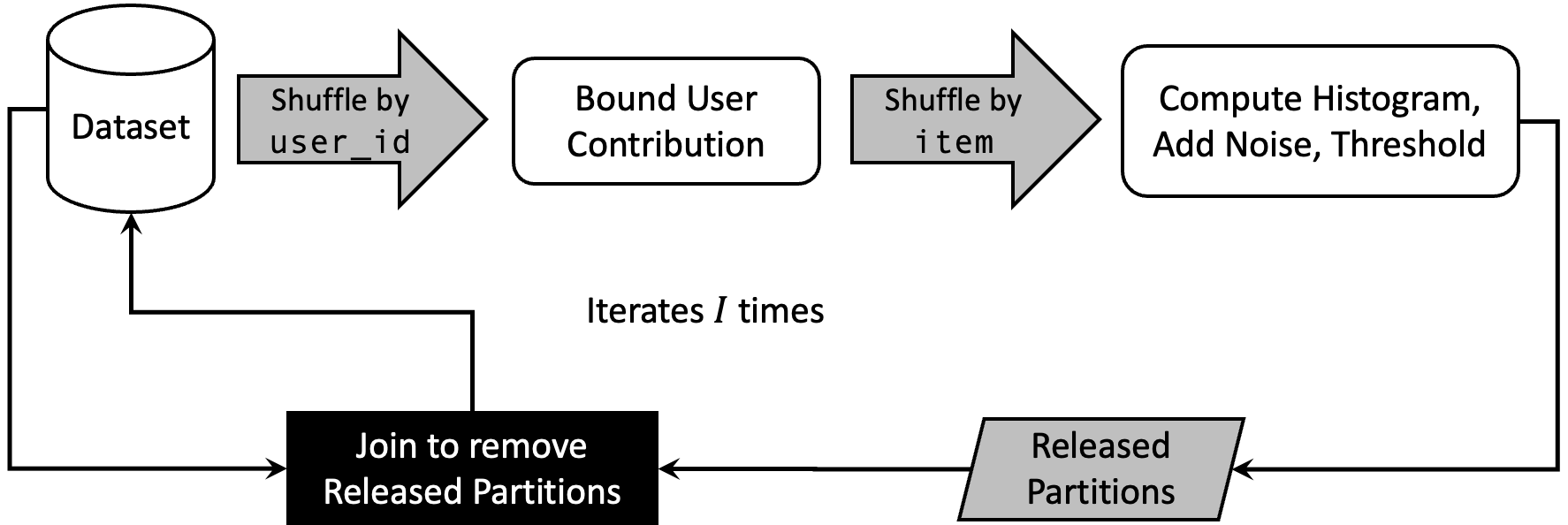}
    \caption{MapReduce Diagram for DP-SIPS. Each iteration within SIPS involves two rounds of MapReduce.}
    \label{fig:mapreduce}
\end{figure}

}


\section{Experimental Results}\label{sec:experiments}

We use data sets of several different sizes from varied text domains to validate the empirical performance of the DP-SIPS algorithm, and we find that in general DP-SIPS has comparable accuracy to DPSU and GW while scaling to large data sets that DPSU and GW timeout on. As with prior works, we focus on the problem of vocabulary discovery under the constraint of user-level privacy. 

We begin by describing the data sets in Section~\ref{sec:datasets} and then, in Section~\ref{sec:accuracy} we discuss how the empirical accuracy of DP-SIPS compares to that of existing algorithms, and in Section~\ref{sec:scalability} we discuss the results from our scalability experiments.

\subsection{Data sets}\label{sec:datasets}

We use six publicly-available data sets and one synthetically-generated data set to study the accuracy of DP-SIPS against existing partition selection algorithms, and we use the largest two of the seven for scalability experiments on Amazon Elastic Map Reduce clusters. \reddit~\cite{reddit_dataset} is a data set of text posts collected from r/AskReddit which appeared as a benchmark in \cite{gopi2020differentially, carvalho2022incorporating}. We also use four data sets that appeared in \cite{carvalho2022incorporating}: \twitter~\cite{twitter_dataset}, comprising customer support tweets to and from large corporations; \finance~\cite{finance_dataset}, financial headlines for stocks; \imdb~\cite{imdb_dataset}, a set of movie reviews scraped from IMDb; \wikipedia~\cite{wiki_dataset}, a set of Wikipedia abstracts (where we treat each abstract as a separate user set). 

For the scalability experiments, we use \amazon~\cite{amazon_dataset}, a publicly-available text data set from Kaggle of 4 million Amazon product reviews, and a synthetically generated data set \synth with 80 million users and 4.6 billion observations.
The synthetic data set, \synth, is generated as follows.
First, each user draws the number of items in their set from a Pareto distribution of scale 10 and shape 1.16; this captures the common ``80-20 principle'' observed on many empirical datasets that states that 80\% of outcomes are due to 20\% of causes.
Then, each item in each user's set is generating by sampling from a zeta distribution of parameter $1.1$; this captures Zip's law, which states that in many types of data (including words in natural languages), the rank-frequency distribution follows an inverse relation.

Table~\ref{tab:dataset_stats} lists the number of users, number of observations, and the vocabulary size of each data set.
Using the same methods as~\cite{gopi2020differentially, carvalho2022incorporating}, we preprocess all non-synthetic data sets using tokenization with \texttt{nltk.word\_tokenize}, removing URLs and symbols, and lower casing all words.

\begin{table}
    \centering
    \begin{tabular}{|c|c|c|c|}
        \hline
        Data set & Users & Observations & Vocabulary Size\\ 
        \hline
        \reddit & 223,388 & 373,983 & 155,701 \\
        \twitter & 702,682 & 2,811,774 & 1,300,123 \\
        \finance & 1,400,465 & 1,400,465  & 267,256 \\
        \imdb & 49,999 & 49,999 & 194,532\\
        \wikipedia & 245,103 & 245,103 & 631,866 \\
        \amazon & 4,000,000 & 4,000,000 & 4,250,427 \\
        \synth & 80,000,000 &4,643,596,660 & 741,129,124 \\
        \hline
    \end{tabular}
    \caption{Number of users, number of observations, and true vocabulary size for each data set we consider.}
    \label{tab:dataset_stats}
\end{table}

\subsection{Accuracy results}\label{sec:accuracy}
Because the goal of partition selection is to privately output as many partitions as possible, we measure accuracy as the \emph{number of partitions released}. To date, there are no analytical accuracy guarantees for any prior partition selection algorithms (in the setting where users contribute multiple items), so we must use experimental validation to understand the accuracy of each algorithm.

We test the accuracy of the four algorithms: Weighted Gaussian (Algorithm~\ref{alg:naive}), SIPS (our Algorithm~\ref{alg:1}), Policy Gaussian (DPSU) from \cite{gopi2020differentially}, and Greedy updates Without sampling (GW) from \cite{carvalho2022incorporating}.

 Table~\ref{tab:accuracy_benchmarks} shows the following accuracy trends for SIPS on the datasets described in Section~\ref{sec:datasets}:
\begin{itemize}
    \item \revision{SIPS only performs slightly worse than both GW and DPSU on one dataset (\finance); in general, SIPS’ performance is on par with DPSU’s.}
    \item The accuracy of SIPS is consistently approximately double that of Weighted Gaussian.
\end{itemize}

\begin{table*}
    \centering
    \begin{tabular}{|c|cccc|}
        \hline
         Data set & Wt. Gauss & \textbf{SIPS} & DPSU & GW \\
         \hline 
         \reddit & 6,160 & 11,392 & 11,186 & 11,984 \\
         \twitter & 12,632 & 23,649 & 23,576 & 27,184 \\
         \finance & 17,350 & 27,559 & 29,005 & 37,503 \\
         \imdb & 3,728 & 7,759 & 5,845 & 3,133\\
         \wikipedia & 11,340 & 21,037 & 18,129 & 11,251\\
         \amazon & 67,522 & 144,805 & 143,997 & 185,563\\
         \synth & 711,601 & 1,137,467 & - & -\\
         \hline
         \finance (deduplicated) & - & - & - & 37,563 \\
         \imdb (deduplicated) & - & - & - & 3,005 \\
         \wikipedia (deduplicated) & - & - & - & 9,802 \\
         \hline
    \end{tabular}
    \vspace{3mm}
    \caption{Number of partitions returned by Wt. Gauss (Algorithm~\ref{alg:naive}), SIPS (Algorithm~\ref{alg:1}), DPSU (Policy Gaussian from~\cite{gopi2020differentially}), and GW (from~\cite{carvalho2022incorporating}) on  six data sets. Additionally, we run GW on three data sets where duplicates within user lists have been removed. Note that the other three algorithms already deduplicate user sets. For SIPS we use 3 iterations and $\privratio=1/3$. For Wt. Gaussian, SIPS, and DPSU, the privacy budget is set to $\rho = 0.1, \delta = 10^{-5}$, and the user contributions are truncated to $\maxcontrib=100$. For GW, $\eps = 1.7$ and $\delta = 8.1142 \times 10^{-5}$\revision{, which implies $10^{-5}$-approximate 0.1-zCDP}. On \synth, DPSU and GW ran out of memory before completing the computation.}
    \label{tab:accuracy_benchmarks}
\end{table*}

In addition, Figures~\ref{fig:vary_eps_reddit}, \ref{fig:vary_delta_reddit}, and \ref{fig:vary_eps_imdb} demonstrate that the relative accuracy of each algorithm on a given data set remains consistent across different choices of privacy parameters.

\subsubsection{\revision{Accuracy Inconsistencies}}

While GW tends to perform well, its accuracy on the \imdb and \wikipedia data sets is below even that of Weighted Gaussian (see Figure~\ref{fig:vary_eps_imdb}). To further investigate this phenomenon, we modify these two data sets in addition to \finance to remove repeated items within each user's set (see the data sets marked as ``deduplicated'' in Table~\ref{tab:accuracy_benchmarks}). Note also that the accuracy of the other three algorithms is unaffected by deduplication since they all perform this preprocessing step to the data sets.

Without any item frequency information, the algorithm adds all of its weight to a randomly-selected item, so one would expect the performance on all three data sets to be worse than Weighted Gaussian. To the contrary, GW's accuracy on the deduplicated \finance data set is still significantly higher than the others, and GW's accuracy on the deduplicated \imdb and \wikipedia is again worse than Weighted Gaussian. Figure~\ref{fig:vary_eps_imdb} shows that GW's poor relative accuracy on \imdb worsens at higher levels of epsilon.

\revision{We believe that GW's inconsistent accuracy depends on the ratio between vocabulary size and number of observations: GW performs very well on datasets where this ratio is small (\finance: 0.19), and very poorly on datasets with large relative vocabulary (\imdb: 3.8). Other factors, such as the short length of \finance headlines compared to \imdb reviews and \wikipedia abstracts, might also play a role. We did not attempt to fully resolve this discrepancy in GW's accuracy};
 however, we present these results as a caution when selecting a partition selection algorithm.

\begin{figure}
    \centering
    \includegraphics[width=241pt]{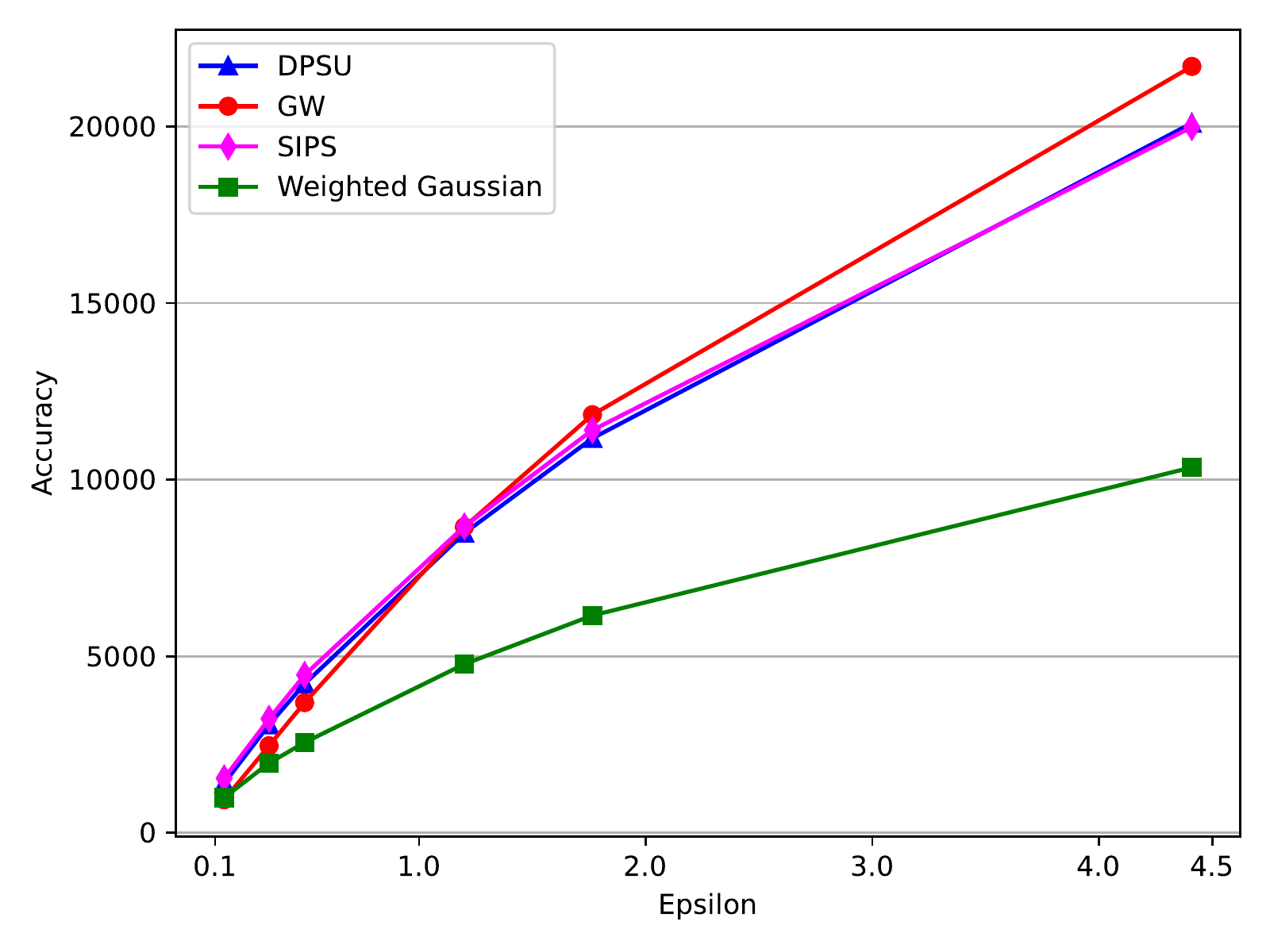}
    \caption{Accuracy on \reddit as a function of varying epsilon (and associated $\rho$), for $\delta=10^{-5}$.}
    \label{fig:vary_eps_reddit}
\end{figure}

\begin{figure}
    \centering
    \includegraphics[width=241pt]{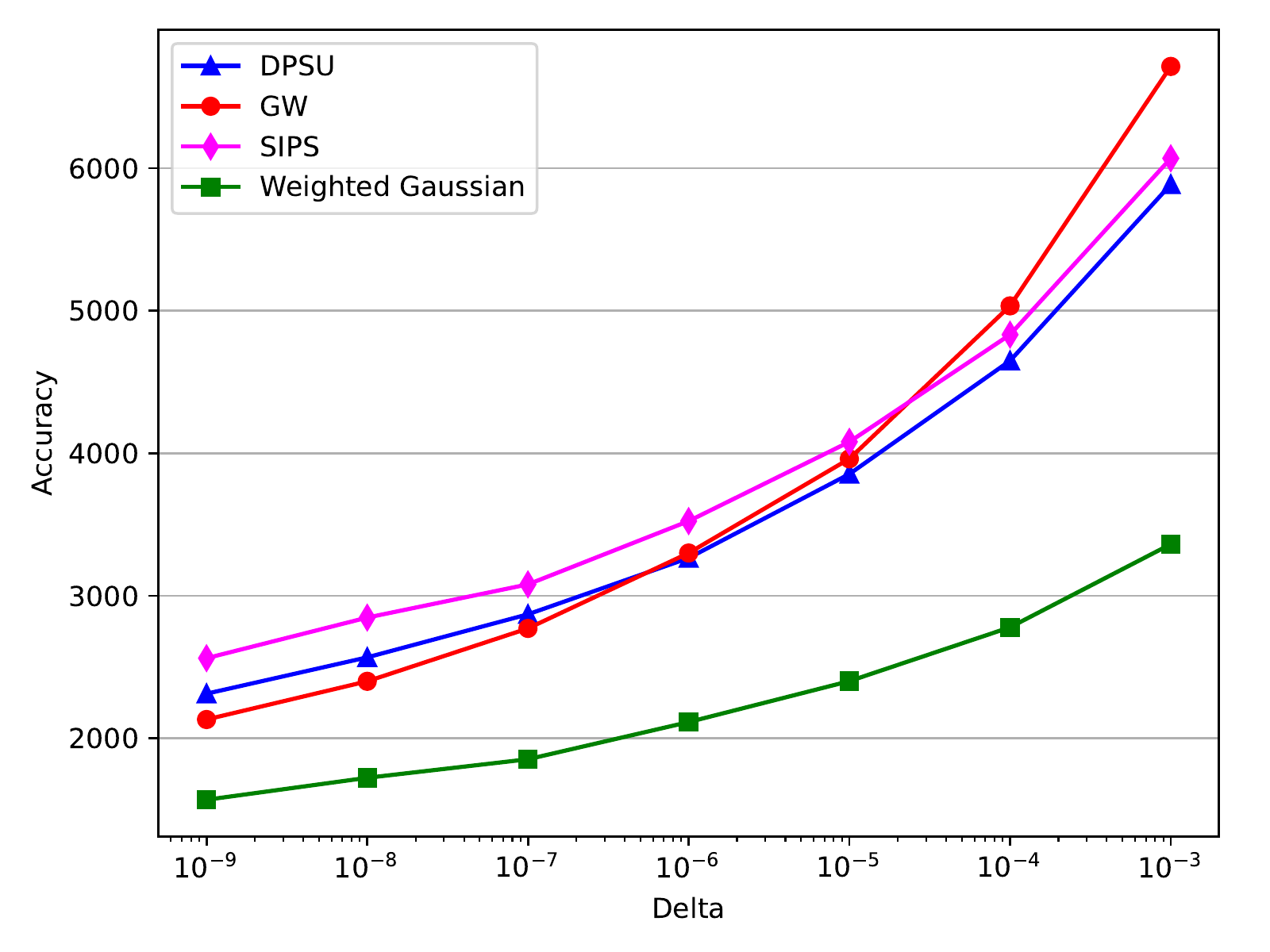}
    \caption{Accuracy on \reddit as a function of varying delta for $\eps = 1.7$.}
    \label{fig:vary_delta_reddit}
\end{figure}

\begin{figure}
    \centering
    \includegraphics[width=241pt]{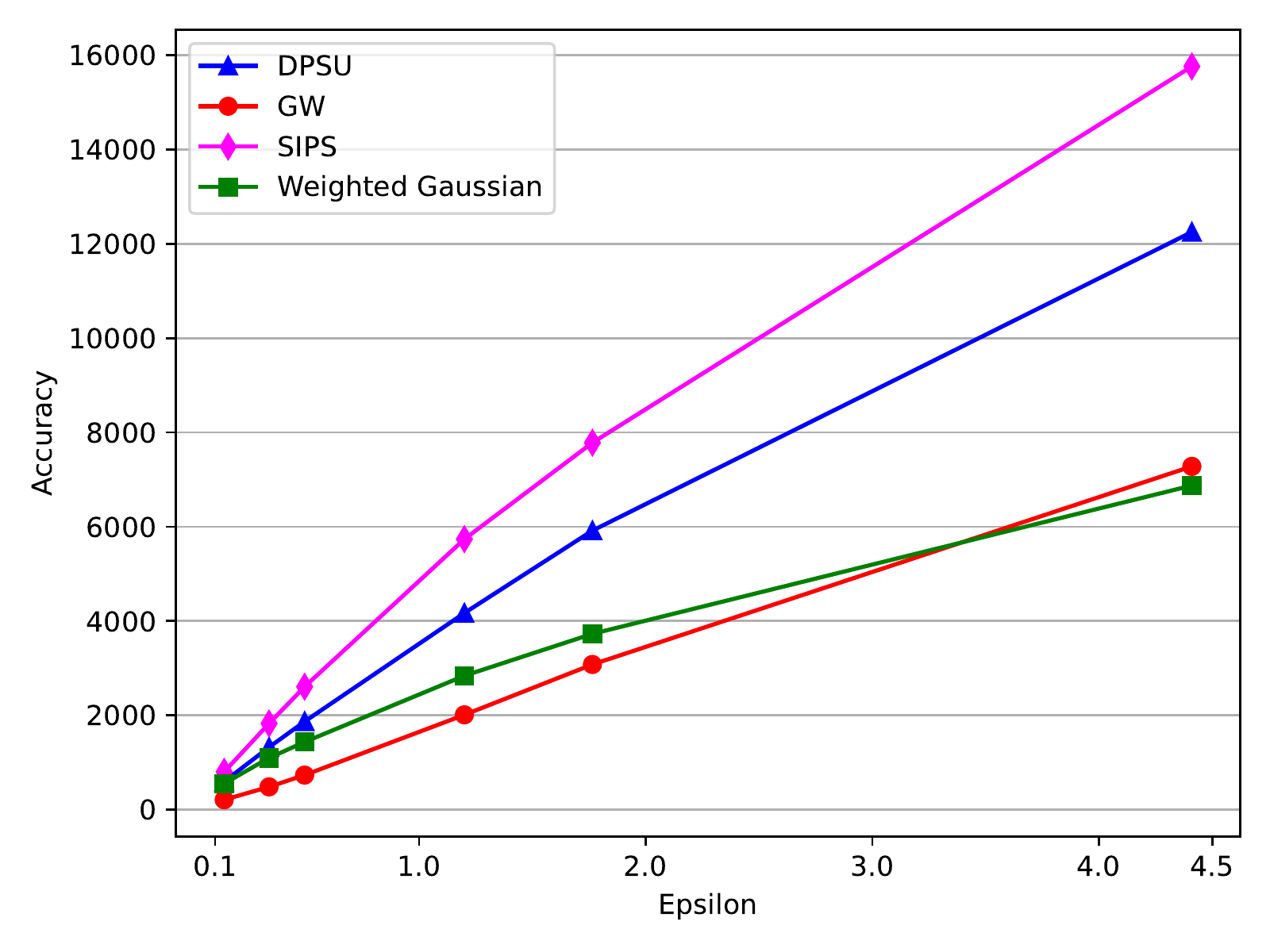}
    \caption{Accuracy on \imdb as a function of varying epsilon for fixed $\delta=10^{-5}$.}
    \label{fig:vary_eps_imdb}
\end{figure}


\subsubsection{Comparing zCDP to DP}
We implement our algorithm to satisfy approximate zCDP to take advantage of its simpler composition properties over standard DP. In~\cite{gopi2020differentially}, the Policy Gaussian algorithm satisfies $(\eps, \delta)$-DP, but the threshold and Gaussian noise can easily be recalibrated to satisfy approximate zCDP instead. We do so in this work to facilitate the comparison to our algorithm. Unfortunately, the GW algorithm from \cite{carvalho2022incorporating} has a bounded $\ell_1$-sensitivity and uses Laplace noise, so it is not easily converted to satisfy approximate zCDP without a significant loss in accuracy. Instead, we use Corollary~\ref{cor:approx_zCDP_to_DP} to choose appropriate $(\eps, \delta)$ parameters for the given $\rho$ and $\delta_{CDP}$. The tables in Appendix~\ref{app:priv_conversions} give the conversions derived from Corollary~\ref{cor:approx_zCDP_to_DP} used for our experiments.

The conversion from approximate zCDP to approximate DP is not exactly tight (meaning the given $(\eps, \delta)$ may be higher than the true privacy guarantee given by the approximate zCDP parameters). Because of the looseness of the conversion, the GW algorithm's accuracy may be slightly inflated when compared to the other algorithms.

Doing the privacy budget accounting using approximate zCDP is not exactly tight either; instead, one could use numerical methods to compute the total privacy budget spent~\cite{meiser2018tight}.
At the time of writing, we could not find a working implementation of such methods that could support the privacy property of Weighted Gaussian.
We do not believe that the results of the experiments would be meaningfully different by doing the privacy analysis this way.
Furthermore, using approximate zCDP has the advantage of being easier to integrate with existing differential privacy software~\cite{tumultanalyticssoftware,tumultanalyticswhitepaper}.

Doing the privacy budget accounting with zCDP is not exactly tight either: numerical methods using

\subsubsection{Selecting hyperparameters}\label{sec:hyperparameters}

Our algorithm has several hyperparameters (aside from the privacy parameters $\rho$ and $\delta$) that need to be set by the data analyst: $\maxcontrib$ the maximum number of items per user, $\privratio$ the ratio between the privacy budget for iteration $i+1$ and $i$, and $\numiters$ the number of iterations in the algorithm. One option is to divide the privacy budget and try several hyperparameter settings and select the setting with the highest accuracy; however, this wastes a lot of privacy budget. We find that the accuracy of DP-SIPS is largely invariant to reasonable settings of the hyperparameters, and we provide general rules of thumb for selecting them.

Figures~\ref{fig:hyperparameters1} and \ref{fig:hyperparameters2} display the accuracy of our algorithm on five data sets with varying \privratio and \maxcontrib, respectively. For the given setting of $\delta$ and $\rho$, the accuracy of DP-SIPS is largely unaffected by different choices of \privratio and \maxcontrib across 5 data sets, and Table~\ref{tab:hyperparameters} shows that the accuracy of DP-SIPS only slightly increases with the number of iterations \numiters on the \reddit data set (for the given settings of the other parameters). \revision{Furthermore, Figure~\ref{fig:hyperparameters1} (privacy ratio $\privratio = 1$) suggests that evenly splitting the budget among the rounds yields lower accuracy than geometrically increasing the budget at each round.}

The results suggest that DP-SIPS has good accuracy for a large range of hyperparameters (aside from the privacy parameters). We recommend using the following settings: $\privratio \in [0.2, 0.4]$, $\numiters \geq 3$, and $\maxcontrib$ set to an overestimate of the true maximum number of per-user contributions. If the true maximum number of contributions per user is public, $\maxcontrib$ should be set to that value.

\begin{table}
\centering
\begin{tabular}{|c|c|}
    \hline
    Iterations $\numiters$ & Partitions Returned \\
    \hline
    \hline
    1 & 5,464\\
    \hline
    2 & 10,041\\
    \hline
    3 & 11,126\\
    \hline
    4 & 11,182\\
    \hline
    5 & 11,541\\
    \hline
    6 & 11,061\\
    \hline
    8 & 11,585\\
    \hline
    10 & 11,637\\
    \hline
\end{tabular}
\caption{SIPS's accuracy as a function of the number of iterations on \reddit with parameters $\rho = 0.1, \delta = 10^{-5}$,  $\privratio = 1/3$, and $\maxcontrib=50$.} \label{tab:hyperparameters}
\end{table}

\begin{figure}
    \centering
    \includegraphics[width=241pt]{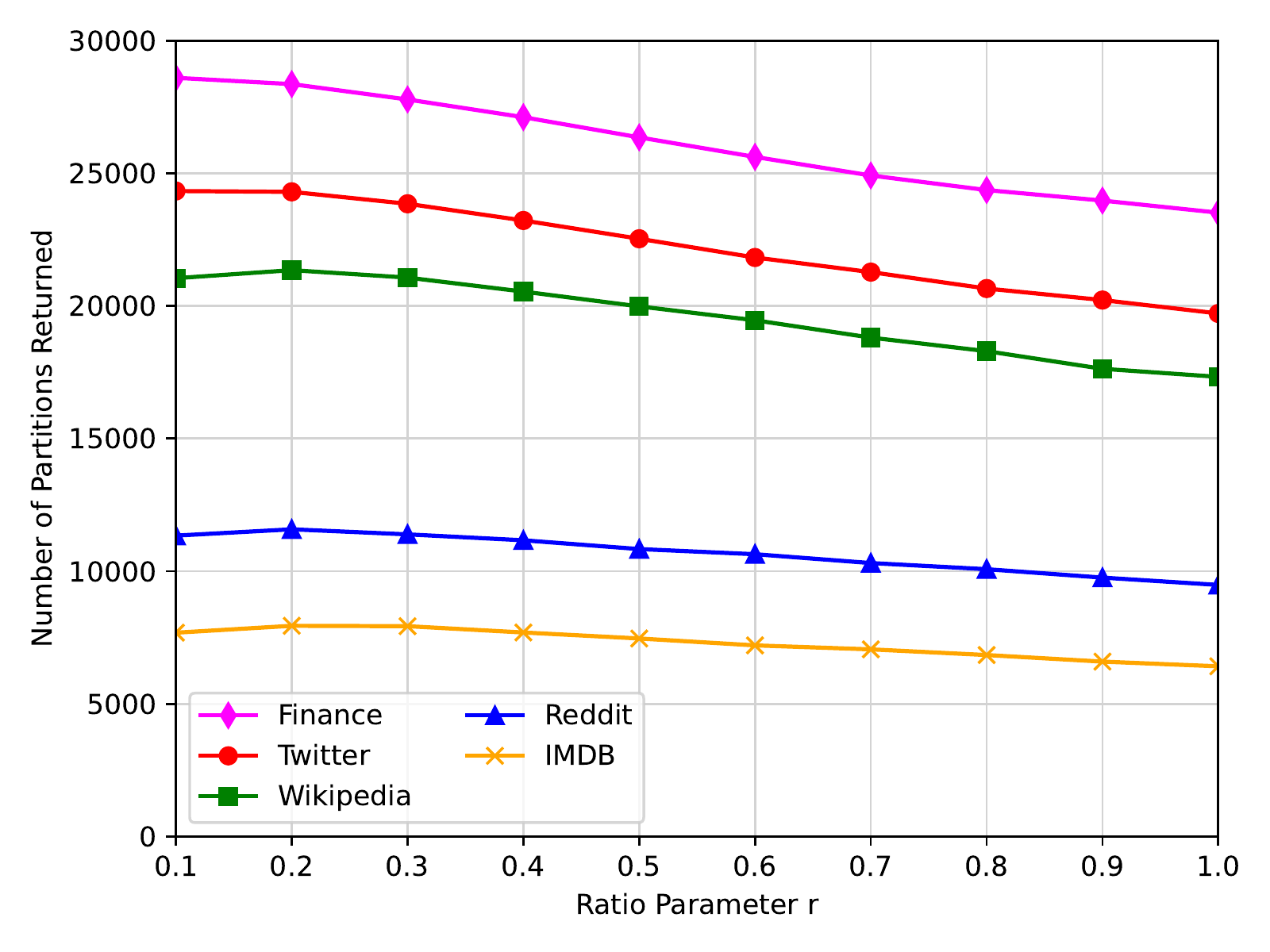}
    \caption{Number of partitions returned by SIPS versus the choice of the privacy ratio parameter, $\privratio$. All other hyperparameters are identical to those listed in Table~\ref{tab:accuracy_benchmarks}. For all 5 data sets, the accuracy is maximized for $r \in [0.1, 0.4]$.}
    \label{fig:hyperparameters1}
\end{figure}

\begin{figure}
    \centering
    \includegraphics[width=241pt]{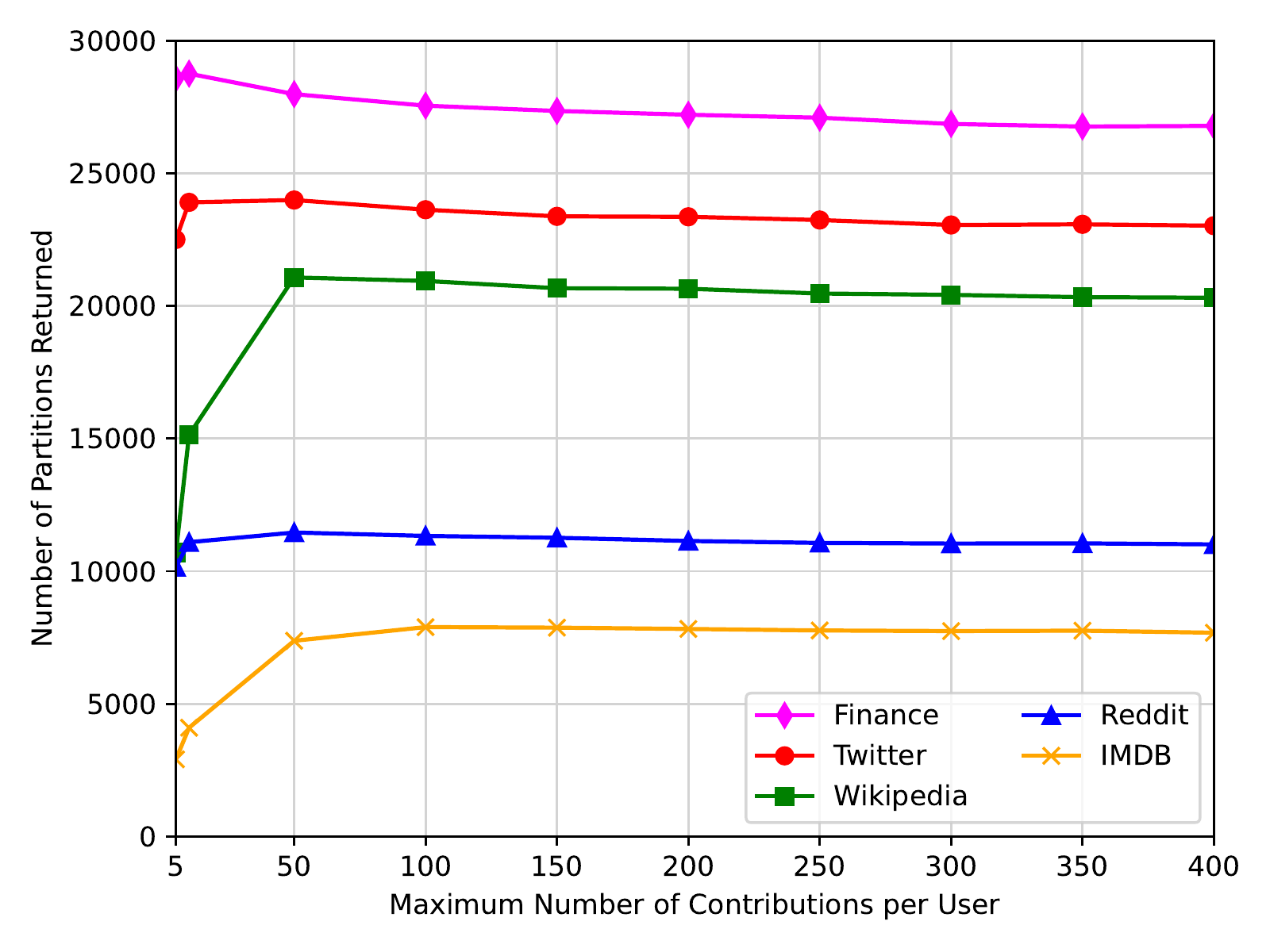}
    \caption{Number of partitions returned by SIPS for varying per-user maximum contribution bounds, $\maxcontrib$. All other hyperparameters are identical to those listed in Table~\ref{tab:accuracy_benchmarks}. On all 5 data sets, increasing $\maxcontrib$ past 100 has almost no affect on accuracy, with maximum accuracy for $\maxcontrib \in [5,150]$. We note that \finance is unusual as each financial headline is quite short.}
    \label{fig:hyperparameters2}
\end{figure}

\subsection{Scalability results}\label{sec:scalability}

\begin{figure*}[h]
    \centering
    \includegraphics[page=1, trim={0.9cm 5cm 2.5cm 1cm},clip, width=469pt]{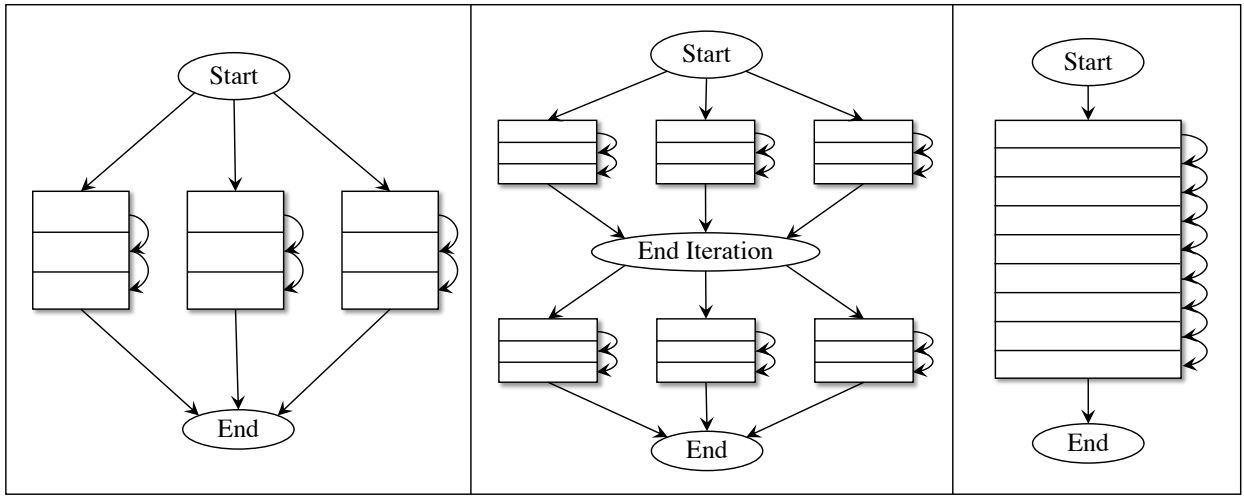}
    \caption{Parallelism diagram for Weighted Gaussian (left), SIPS (middle), and greedy algorithms (right). Arrows represent computational steps while boxes represent the data set. The steps of Weighted Gaussian can run in parallel on parts of the data set while greedy algorithms must run sequentially over the data set. Each iteration of SIPS can run in parallel, but the iterations must be done sequentially.}
    \label{fig:parallelism}
\end{figure*}

We benchmark the algorithms in several ways, and we find that DP-SIPS and Weighted Gaussian scale well with large data sets, while DPSU and GW do not.

For the scalability experiments, we implement all of the algorithms in PySpark to take advantage of parallelism within the algorithms. We then run the algorithms on Amazon Elastic Map Reduce (EMR) clusters, and we tailor the PySpark session settings to the number of cores and memory allocation for both driver and executor nodes to reflect the available resources of the machines.

To benchmark the algorithms, we measure the amount of time required to run the algorithm (after the dataset has already been read in to PySpark) and return the number of partitions that were discovered, in order to ensure PySpark's lazy evaluation executed the algorithm during the timing phase. See Appendix~\ref{app:specs} for information about the machine specifications.

For our first scalability experiment, we run the algorithms on subsets of \synth, a synthetic data set with 80 million users and 4.6 billion items in total, and benchmark the algorithms as the number of users increases. Figure~\ref{fig:synth_runtime} shows the results of this experiment. DP-SIPS and Weighted Gaussian scale well to large data sets while DPSU and GW timeout or run out of memory on subsets with just 2 million and 10 million users, respectively. We ran this experiment on an EMR cluster with 8 nodes, each of size \texttt{m5a.8xlarge} (each node has 32 processor cores and 128 GB of memory).

\begin{figure}
    \centering
    \includegraphics[width=241pt]{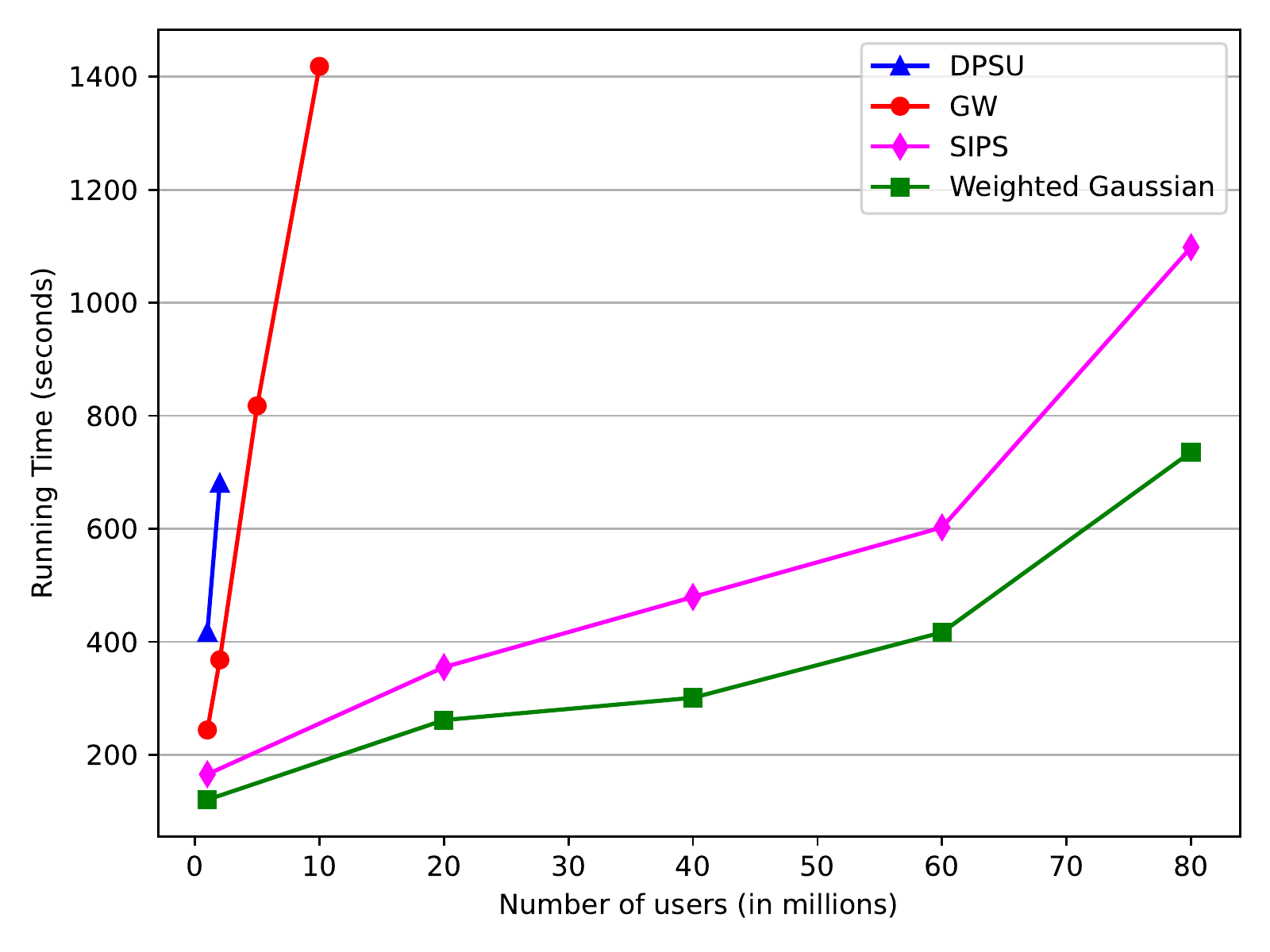}
    \caption{Algorithm runtimes on subsets of \synth of given sizes on a cluster of 8 \texttt{m5a.8xlarge} nodes. The algorithms were run with the same hyperparameters as those listed in Table~\ref{tab:accuracy_benchmarks}. DPSU and GW timed out after subsets with 2 million and 10 million users, respectively.}
    \label{fig:synth_runtime}
\end{figure}

To further investigate the scaling behaviors of the algorithms, we ran more scalability experiments on a smaller data set, \amazon, which consists of product reviews from 4 million users. Figure~\ref{tab:scalability} shows the results from the first experiment, in which we measure the runtimes of the algorithms on clusters with a single master node and varying numbers of core nodes. All nodes are of type m5a.2xlarge, which has 8 processor cores and 32 GB of memory. Figure~\ref{tab:scalability} illustrates the following runtime trends:
\begin{itemize}
    \item The runtimes of Weighted Gaussian and SIPS both decrease as the number of core nodes increases.
    \item The runtimes of DPSU and GW remain approximately the same even as the number of core nodes increases.
\end{itemize}

These results confirm the intuition that Weighted Gaussian and SIPS are both parallelizable and thus scale well with increased cluster sizes, even on large data sets. Additionally, it confirms our observation that since DPSU and GW need to iterate sequentially over each user, increasing the cluster size does little to improve their running times.

\begin{center}
\begin{figure}
    \centering
    \includegraphics[width=241pt]{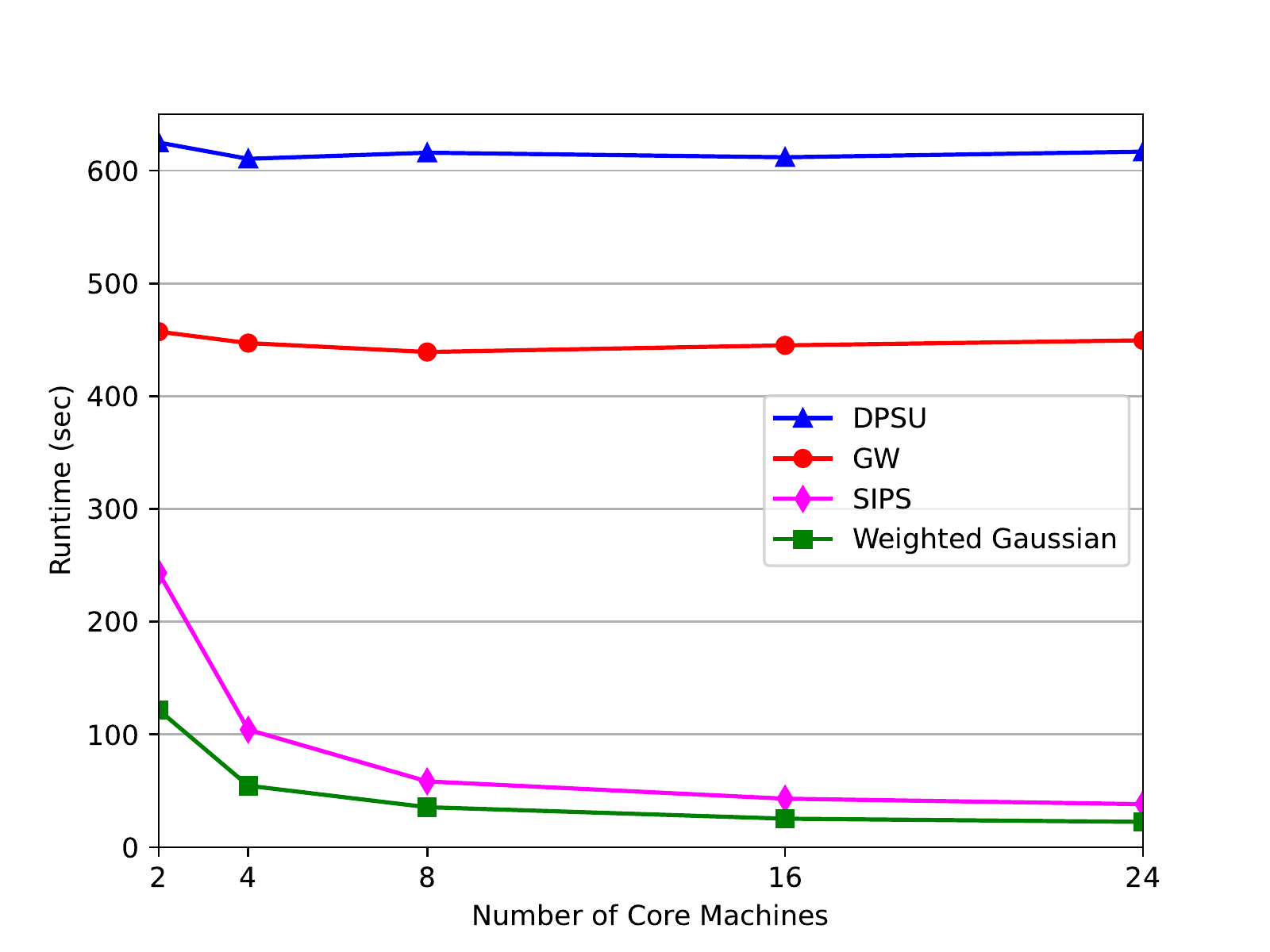}
    \caption{Algorithm runtimes on \amazon dataset with increasing number of m5a.2xlarge cores in the cluster. The algorithms were run with the same hyperparameters as those listed in Table~\ref{tab:accuracy_benchmarks}.}
    \label{tab:scalability}
\end{figure}
\end{center}


We run an additional scalability experiment on \amazon: instead of increasing the number of core nodes, we increase the sizes of the core nodes. We use a single master node and two core nodes, and increase the sizes of all three nodes. Table~\ref{tab:scalability2} shows similar trends as the previous experiment: Weighted Gauss and SIPS scale with increased node sizes while DPSU and GW do not. So, even when the sizes of the machines increase, this makes little difference in the runtimes of DPSU and SIPS.

\begin{center}
\begin{figure}
    \centering
    \includegraphics[width=241pt]{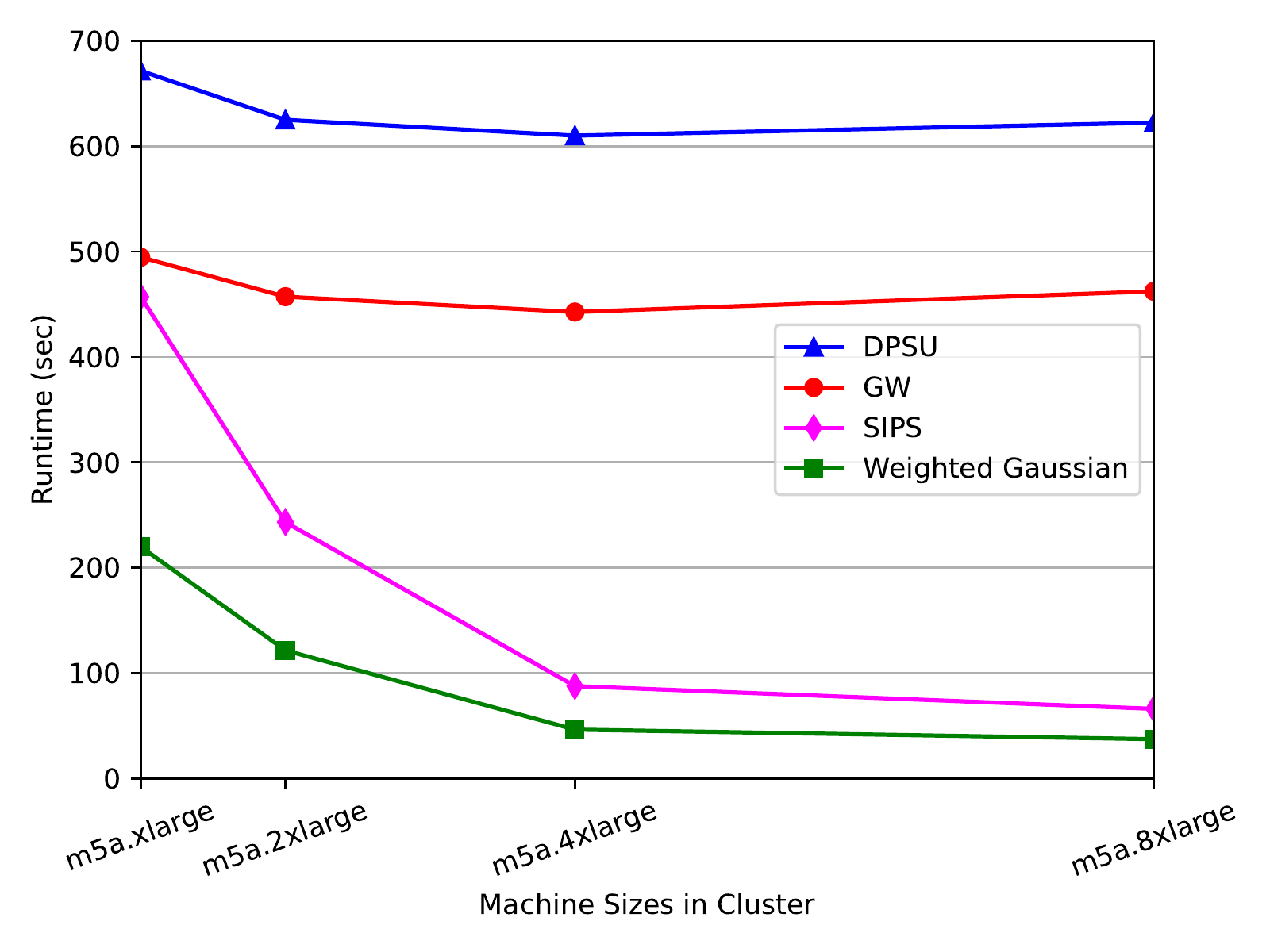}
    \caption{Algorithm runtimes on \amazon dataset with increasing sizes of nodes in a cluster with 1 master node and 2 core nodes. The algorithms were run with the same hyperparameters as those listed in Table~\ref{tab:accuracy_benchmarks}.}
    \label{tab:scalability2}
\end{figure}
\end{center}


Comparing Figures~\ref{tab:scalability} and \ref{tab:scalability2}, we see that for DP-SIPS, the communication overhead between machines is small. Specifically we can compare SIPS's performance on 8 cores in Figure~\ref{tab:scalability}, a cluster with 1 master node and 8 core nodes each of type m5a.2xlarge (8 CPU cores, 32 GB memory) to m5a.8xlarge (32 CPU cores, 128 GB memory) in Figure~\ref{tab:scalability2}, a cluster with 1 master node and 2 core nodes. In both, the core nodes have in total $8 \cdot 8 = 2 \cdot 32$ CPU cores and $8 \cdot 32 = 2 \cdot 128$ GB of memory among them. The runtimes of SIPS in the first case is $58.41$ seconds while in the second case, it is $66.28$ seconds, which suggests that the communication overhead is small. In fact, SIPS runs faster on a cluster of 9 smaller machines than a cluster of 3 larger machines. This discrepancy could be the result of any number of factors in the machine specifications.

Our takeaway is clear: the scalability experiments confirm that DPSU and GW are not suitable for massive data sets and the prohibitively poor run times of DPSU and GW on industrial-scale data sets cannot be overcome by increasing the computational power of the cluster (either in number of machines or even machine size) because the algorithms are not parallelizable on a fundamental level. For larger data sets, SIPS and Weighted Gaussian are the only feasible algorithms and SIPS consistently has improved accuracy over Weighted Gaussian.




\section{Discussion}\label{sec:discussion}
Differentially private partition selection (or set union) is a fundamental problem for many private data analysis tasks. Prior approaches to this problem either suffer from poor accuracy or are prohibitively slow on large data sets because they are designed to sequentially iterate over the users. We present a simple algorithm for differentially private partition selection that achieves accuracy that is comparable to DPSU and runtimes that scale well with increased computational resources. 

\subsection{Unsuccessful Attempts}
One natural question we explored is: can we boost the accuracy of DPSU using our method of iterating several times and removing partitions from the data set that were previously returned? We found that empirically this did not boost the accuracy on the data sets that we tested. Intuitively, this makes sense because the per-user Gaussian update policy in DPSU prevents users from adding weight to items that have already reached the buffered threshold. Such items will very likely be released after the noise addition and thresholding steps. The DPSU approach achieves the same goal as iterating, albeit with more precision due to the greedy nature of the Policy Gaussian per-user updates. 

\subsection{Future work}

This work raises several natural questions. From a practical perspective, one may wonder whether there is a scalable algorithm with higher accuracy than DP-SIPS. Another interesting line of inquiry could consider theoretical guarantees on the number of partitions released, possibly under some distributional assumptions on the data. We believe these lines of inquiry would give important insights into a fundamental problem in private data analysis.
\section{Acknowledgments}

We are thankful to the anonymous reviewers for the helpful feedback. This work was supported by Tumult Labs.

\bibliographystyle{plain}
\bibliography{bib.bib}

\appendix
\section{Proof of Theorem~\ref{thm:naive_priv}}\label{app:naive_priv}

\begin{proof}[Proof of Theorem~\ref{thm:naive_priv}]
Since the algorithm is simply computing a stable histogram, the proof follows from standard arguments.

Let us denote Algorithm~\ref{alg:naive} by $\mech$.
Fix any $\rho>0$, any $\delta \in (0,1)$, and any $\maxcontrib\in \mathbb{N}$.
Fix two neighboring data sets $x = (W_1, \ldots, W_N)$ and $x'$, where $x'$ contains one extra user set $W^*$.
Let $E$ be the event that $\mech(x') \subseteq \cup_{i\in [N]} \overline{W_i}$; that is, the partitions returned for data set $x'$ are in the support of $\mech(x)$.
We will first argue that, conditioned on event $E$, the output distributions of $\mech(x)$ and $\mech(x')$ satisfy pure zCDP (Definition~\ref{def:pure_zCDP}).
Then we will argue that, because of the thresholding step, event $E$ occurs with probability at least $1-\delta$. 

If we condition both $\mech(x)$ and $\mech(x')$ on event $E$, then by the Gaussian mechanism (Definition~\ref{def:gauss_mech}) and post-processing (Lemma~\ref{lem:comp}), Algorithm~\ref{alg:naive} satisfies pure zCDP (Definition~\ref{def:pure_zCDP}).

Now, it remains to show that $\Pr[E]\geq 1-\delta$. First, note that each user contributes to most \maxcontrib items in the histogram, and further note that event $E^c$ occurs when at least one item from $W^*$ that is not in $x$ is released. Let $W'$ denote the set of items in $W^*$ that do not appear in $x$. Then,

\begin{align*}
    \Pr[E]
    & = \Pr[\forall u \in W', u \notin \mech(x)]\\
    & = \Pr[\cap_{u\in W'} \hat{H}[u] \leq \thresh]\\
    & = \Pr[\cap_{u\in W'} H[u] + Z_u \leq \thresh] \quad \text{For $Z_u \sim \N(0, 1/2\rho)$}\\
    & = \prod_{u\in W'} \Pr[H[u] + Z_u \leq \thresh] \quad \text{By independence of $Z_u$'s}\\
    & = \prod_{u\in W'} \Pr\left[\frac{1}{\sqrt{|W'|}} + Z_u \leq \thresh\right]\\
    & = \left(\Pr\left[\frac{1}{\sqrt{|W'|}} + Z_u \leq \thresh\right]\right)^{|W'|} \quad \text{Since $Z_u$'s are i.i.d.}\\
    & \geq \min_{k \in [\maxcontrib]} \left\{ \left(\Pr\left[\frac{1}{\sqrt{k}} + Z_u \leq \thresh\right]\right)^{k}\right \} \quad \text{Since $|W'| \leq \maxcontrib$}\\
    & \geq 1-\delta \quad \text{By definition of $\thresh$}.
\end{align*}

\revision{
Therefore, $\mech$ is $\delta$-approximate $\rho$-zCDP.
}
\end{proof}


\section{Privacy Parameter Conversions}\label{app:priv_conversions}
Tables~\ref{tab:vary_eps} and \ref{tab:vary_delta} give the conversions between zCDP and DP that we use for the experiments, as well as the values of $\alpha$ that are used in Corollary~\ref{cor:approx_zCDP_to_DP} to do the conversions.

\begin{table}[H]
    \centering
    \begin{tabular}{|c|c|c|c|c|}
        \hline
         $\rho$ & $\delta_{CDP}$ & $\eps$ & $\delta_{DP}$ & $\alpha$ \\
         \hline
         0.001 & $1\times 10^{-5}$ & 0.14 & $5.00 \times 10^{-5}$ & 77.033 \\
         0.005 & $1\times 10^{-5}$ & 0.338 & $5.08 \times 10^{-5}$ & 37.037 \\
         0.01 & $1\times 10^{-5}$ & 0.495 & $4.99 \times 10^{-5}$ & 27.128 \\
         0.05 & $1\times 10^{-5}$ & 1.2 & $4.99 \times 10^{-5}$ & 13.283 \\
         0.1 & $1\times 10^{-5}$ & 1.765 & $4.96 \times 10^{-5}$ & 9.86 \\
         0.5 & $1\times 10^{-5}$ & 4.41 & $4.90 \times 10^{-5}$ & 5.127 \\
         \hline
    \end{tabular}
    \caption{Values of $\rho, \delta_{CDP}, \eps,$ and $\delta_{DP}$ used for experiments with fixed $\delta_{DP}$ and varying $\eps$.}
    \label{tab:vary_eps}
\end{table}

\begin{table}[H]
    \centering
    \begin{tabular}{|c|c|c|c|c|}
        \hline
         $\rho$ & $\delta_{CDP}$ & $\eps$ & $\delta_{DP}$ & $\alpha$ \\
         \hline
         0.005 & $1\times 10^{-9}$ & 0.62 & $1.04 \times 10^{-9}$ & 64.073 \\
         0.0055 & $1\times 10^{-8}$ & 0.62 & $1.02 \times 10^{-8}$ & 58.443 \\
         0.006 & $1\times 10^{-7}$ & 0.62 & $1.01 \times 10^{-7}$ & 53.732 \\
         0.007 & $1\times 10^{-6}$ & 0.62 & $1.01 \times 10^{-6}$ & 46.334 \\
         0.0083 & $1\times 10^{-5}$ & 0.62 & $1.01 \times 10^{-5}$ & 39.398 \\
         0.01 & $1\times 10^{-4}$ & 0.62 & $1.01 \times 10^{-4}$ & 33.037 \\
         0.013 & $1\times 10^{-3}$ & 0.62 & $1.01 \times 10^{-3}$ & 25.863 \\
         \hline
    \end{tabular}
    \caption{Values of $\rho, \delta_{CDP}, \eps, $ and $\delta_{DP}$ used for experiments with fixed $\eps$ and varying $\delta_{DP}$.}
    \label{tab:vary_delta}
\end{table}
\section{Amazon EMR Cluster Specifications}\label{app:specs}

Table~\ref{tab:EMR_specs} shows the number of CPU cores and amount of memory available to each type of machine on Amazon EMR that we use in our experiments. 

\begin{table}[H] 
    \centering
    \begin{tabular}{|c|c|c|}
    \hline
        Machine Size & Cores & Memory (GB)\\
        \hline
        m5a.xlarge & 4 & 16 \\
        m5a.2xlarge & 8 & 32 \\
        m5a.4xlarge & 16 & 64\\
        m5a.8xlarge & 32 & 128\\
        \hline
    \end{tabular}
    \caption{Number of cores and memory per EMR machine of each size.}
    \label{tab:EMR_specs}
\end{table}

\end{document}